\pgfplotsset{compat=1.11}
\newtheorem{theorem}{Theorem}
\newtheorem{lemma}{Lemma}
\newtheorem{prop}{Proposition}
\newtheorem{proposition}[prop]{Proposition}
\theoremstyle{definition}\newtheorem{rem}{Remark}
\newtheorem{defi}{Definition}
\patchcmd{\@maketitle}{\LARGE \@title}{\fontsize{16}{19.2}\selectfont\@title}{}{}
\newcommand{\vx}{\boldsymbol{x}}
\newcommand{\vk}{\boldsymbol{k}}
\newcommand{\vp}{\boldsymbol{p}}
\newcommand{\va}{\boldsymbol{\alpha}}
\newcommand{\vy}{\boldsymbol{y}}
\newcommand{\be}{\begin{equation}}
\newcommand{\ee}{\end{equation}}
\newcommand{\CCC}{\mathbb{C}}
\newcommand{\NNN}{\mathbb{N}}
\newcommand{\RRR}{\mathbb{R}}
\newcommand{\ZZZ}{\mathbb{Z}}
\newcommand{\Hilbert}{\mathscr{H}}
\newcommand{\hilbert}{\mathcal{h}}
\newcommand{\kilbert}{\mathcal{k}}
\DeclareMathOperator{\vol}{vol}
\DeclareMathOperator{\tr}{tr}
\newcommand{\sR}{\mathscr{R}}
\newcommand{\scp}[2]{\langle #1 , #2 \rangle}
\newsavebox\affbox
\author[1,2]{\textbf{Pablo Costa Rico}}
\author[3]{\textbf{Roderich Tumulka}}
\affil[1]{Department of Mathematics, Technische Universit\"at M\"unchen, Germany
}
\affil[2]{Munich Center for Quantum Science and Technology (MCQST), Germany}
\affil[3]{Fachbereich Mathematik, Eberhard-Karls-Universit\"at T\"ubingen, Auf der Morgenstelle 10, 72076 T\"ubingen, Germany
}
\titlespacing\section{0pt}{12pt plus 4pt minus 2pt}{0pt plus 2pt minus 2pt}
\titlespacing\subsection{12pt}{12pt plus 4pt minus 2pt}{0pt plus 2pt minus 2pt}
\titlespacing\subsubsection{12pt}{12pt plus 4pt minus 2pt}{0pt plus 2pt minus 2pt}
\titleformat{\section}{\normalfont\fontsize{10}{15}\bfseries}{\thesection.}{1em}{}
\titleformat{\subsection}{\normalfont\fontsize{10}{15}\bfseries}{\thesubsection.}{1em}{}
\titleformat{\subsubsection}{\normalfont\fontsize{10}{15}\bfseries}{\thesubsubsection.}{1em}{}
\titleformat{\author}{\normalfont\fontsize{10}{15}\bfseries}{\thesection}{1em}{}
\title{\textbf{\huge On the Problem of Defining Charge\\[3mm] Operators for the Dirac Quantum Field}\\
	}
\date{July 12, 2024}    
\newcommand\blfootnote[1]{%
  \begingroup
  \renewcommand\thefootnote{}\footnote{#1}%
  \addtocounter{footnote}{-1}%
  \endgroup
}
\begin{document}

\pagestyle{headings}	
\newpage
\setcounter{page}{1}
\renewcommand{\thepage}{\arabic{page}}

\captionsetup[figure]{labelfont={bf},labelformat={default},labelsep=period,name={Figure }}	\captionsetup[table]{labelfont={bf},labelformat={default},labelsep=period,name={Table }}
\setlength{\parskip}{0.5em}
	
\maketitle
	
\noindent\rule{15cm}{0.5pt}
	\begin{abstract}
		It is well known how to define the operator $Q$ for the total charge (i.e., positron number minus electron number) on the standard Hilbert space of the second-quantized Dirac equation. Here we ask about operators $Q_A$ representing the charge content of a region $A\subseteq \mathbb{R}^3$ in 3d physical space. There is a natural formula for $Q_A$ but, as we explain, there are difficulties about turning it into a mathematically precise definition. First, $Q_A$ can be written as a series  but its convergence seems hopeless. Second, we show for some choices of $A$ that if $Q_A$ could be defined then its domain could not contain either the vacuum vector or any vector obtained from the vacuum by applying a polynomial in creation and annihilation operators. Both observations speak against the existence of $Q_A$ for generic $A$. 
        \\ \\
		\textbf{\textit{Keywords}}: \textit{Dirac equation; field operator; second quantization; positron; relativistic position operator}
	\end{abstract}
\noindent\rule{15cm}{0.4pt}

\blfootnote{
			 \hspace{-15pt}\small $^{*}$\textbf{Pablo Costa Rico} \textit{
				\textit{E-mail address: \color{cyan}pablo.costa@tum.de}}\\
                \small $^{*}$\textbf{Roderich Tumulka} \textit{
				\textit{E-mail address: \color{cyan}roderich.tumulka@uni-tuebingen.de}}\\ 
		}

\section{Introduction}

We are concerned here with charge operators for the Dirac quantum field in Minkowski space-time. The natural definition of the charge current density operator, given by several books on quantum field theory, is
\be
j^\mu(x) = -e  : \overline{\Psi}(x) \, \gamma^\mu \Psi(x):\,,
\ee
where $x$ is any space-time point, $-e$ the electron charge, $\Psi(x)$ the field operator, and colons mean normal ordering (Wick ordering);
see for example Schweber \cite{Schw61} Sec.~8a, Eq.~(83), Bogoliubov and Shirkov \cite{BS80} Sec.~9.1 Eq.~(9), or Geyer et al.~\cite[p.~137]{Geyer}, similarly Kaku \cite{Kaku} Eq.~(3.124). This entails in particular (for $\mu=0$) that the charge density operator at time 0 at location $\vx\in\RRR^3$ is
\be
Q(\vx) = -e \sum_{s=1}^4 : \Psi_s^*(\vx) \, \Psi_s(\vx):
\ee
(where $R^*$ means the adjoint operator of $R$ and $s$ labels the 4 dimensions of the Dirac spin space), and therefore the amount of charge contained in a region $A\subseteq \RRR^3$ is represented by the operator
\be \label{FormalChargeGeneral}
Q_A = -e \int_A \mathrm{d}\vx ~ \sum_{s=1}^4 ~ : \Psi_s^*(\vx) \, \Psi_s(\vx):
\ee
(where $\mathrm{d}\vx=\mathrm{d}x_1 \, \mathrm{d}x_2 \, \mathrm{d}x_3$ means the 3d volume element).
While for $A=\RRR^3$ it is known that one can rigorously define a self-adjoint operator $Q$ representing the total charge with spectrum $e\ZZZ$ on the standard Hilbert space of the second-quantized Dirac equation (see, e.g., \cite[Eq.~(10.52)]{Thaller} and Sec.~\ref{sec:totalcharge} below), we discuss here whether this can also be done for an arbitrary (Borel measurable) subset $A\subseteq \RRR^3$. The considerations we report here suggest that this is not possible, except in the trivial cases when either $A$ or its complement $A^c=\RRR^3\setminus A$ has volume 0 (so that either $Q_A=0$ or $Q_A=Q$). 

Studies of the Dirac quantum field \cite{DM16a,FLS} and the measurement process in relativistic quantum field theory keep attracting interest \cite{Beck,LT19,FV18,LT21, Tumulka,FP23}. Our finding is particularly relevant to the considerations of \cite{Tumulka}, where the charge operators $Q_A$ are used for the construction of a PVM on a suitable configuration space intended to represent the position operators for electrons and positrons. 

The problem does not get ameliorated if we consider ``fuzzy sets'' represented by a continuous function $g:\RRR^3\to[0,1]$ instead of the (discontinuous) characteristic function of $A$ that only assumes the values 0 and 1, as in 
\be\label{g}
Q_g = -e \int_{\RRR^3} \mathrm{d}\vx ~ g(\vx) ~ \sum_{s=1}^4 ~ :\Psi_s^*(\vx) \, \Psi_s(\vx): \,.
\ee
This point is discussed in Remark~\ref{rem:g} in Section~\ref{sec:failure}.

After reviewing the standard construction of the Dirac quantum field in Section~\ref{sec:setup}, we proceed in three steps.
\begin{enumerate}
\item In Section \ref{sec:work}, we present reasons for thinking that the rigorous interpretation of the formula \eqref{FormalChargeGeneral} should be this: Choose any orthonormal basis (ONB) $(f_j)_{j\in\NNN}$ of the subspace $\kilbert_A=L^2(A,\CCC^4)$ of functions in $L^2(\RRR^3,\CCC^4)$ that vanish outside $A$ and set
\be\label{QAdef}
Q_A := -e\sum_{j=1}^\infty : \Psi^*(f_j) \Psi(f_j) :~.
\ee

\item While, as we show in Section~\ref{sec:work}, the analog of formula \eqref{QAdef} for a finite-dimensional subspace $\kilbert$ is well-defined, self-adjoint, and independent of the choice of ONB, we explain in Section \ref{sec:failure} why the convergence of \eqref{QAdef} seems hopeless if $\vol(A)\neq 0$ and $\vol(A^c)\neq 0$.

\item In Section \ref{sec:thm1} we show for some examples of sets $A$ (in fact, for axiparallel cubes of side length $2\pi$) that even if formula \eqref{QAdef} could be made sense of, and $Q_A$ could be rigorously defined, it would have the unpleasant property that its domain $D_A$ contains neither the vacuum vector $|\Omega\rangle$, nor any vector obtained from $|\Omega\rangle$ by applying finitely many creation and annihilation operators, nor any finite linear combination of such vectors. More precisely, let 
\be\label{F0def}
F_0:= \bigcup_{m=0}^\infty \bigcup_{n=0}^\infty\Bigl\{ \Psi^*(g_1)\cdots \Psi^*(g_m)\Psi(h_1)\cdots\Psi(h_n)|\Omega\rangle \Big| \text{all }g_i,h_j\in L^2(\RRR^3,\CCC^4)\Bigr\}
\ee
let $F$ be the span (i.e., the set of finite complex-linear combinations) of $F_0$ in the Hilbert space of the Dirac quantum field (see Section \ref{sec:setup} for the definition), and let
\be\label{Qtruncated}
Q_{A,(f_j)}^J := \sum_{j=1}^J :\Psi^*(f_j) \Psi(f_j) :
\ee
denote the \emph{truncated} charge operator based on the ONB $(f_j)_j$ of $L^2(A,\CCC^4)$. 
\end{enumerate}

\begin{theorem}\label{TheoremVacuum}
Let $0\neq\psi\in F$, let $\vx\in\RRR^3$, and let $A=\vx+[-\pi,\pi]^3\subset \RRR^3$. Then there is an ONB $(f_j)_{j\in\NNN}$ of $L^2(A,\CCC^4)$ such that
\be
\lim_{J\to\infty} \Bigl\| Q_{A,(f_j)}^J ~ \psi\Bigr\|^2  = \infty\,.
\ee
\end{theorem}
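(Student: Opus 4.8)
The plan is to isolate, inside $Q_{A,(f_j)}^J$, the part that creates electron–positron pairs. Using the construction of Section~\ref{sec:setup}, write $P_{\pm}$ for the projections onto the positive/negative energy subspaces of the free Dirac operator, $P_A$ (resp.\ $P_{A^c}$) for multiplication by $\mathbf 1_A$ (resp.\ $\mathbf 1_{A^c}$), $u_j:=P_+f_j$, and $v_j$ for the positron mode attached to $P_-f_j$ by charge conjugation, so that $\langle v_j,v_k\rangle=\langle P_-f_k,P_-f_j\rangle$. With $b,d$ the electron and positron annihilation operators one has
\[
:\Psi^*(f_j)\Psi(f_j):\;=\;b^*(u_j)b(u_j)-d^*(v_j)d(v_j)+b^*(u_j)d^*(v_j)+d(v_j)b(u_j),
\]
hence $Q_{A,(f_j)}^J=N_J+T_J+T_J^{*}$ with $N_J$ particle-number preserving, $T_J:=\sum_{j\le J}b^*(u_j)d^*(v_j)$ raising the electron and the positron number each by one, and $T_J^{*}$ lowering them each by one. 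For the vacuum this already decides the matter: $b(u_j)|\Omega\rangle=d(v_j)|\Omega\rangle=0$ gives $Q_{A,(f_j)}^J|\Omega\rangle=T_J|\Omega\rangle$, a one-electron–one-positron state, and the CAR yield
\[
\|T_J|\Omega\rangle\|^2=\sum_{j,k\le J}\langle u_j,u_k\rangle\,\langle v_j,v_k\rangle\;\ge\;\sum_{j\le J}\bigl(\|P_+f_j\|^2-\|P_AP_+f_j\|^2\bigr)=\sum_{j\le J}\|P_{A^c}P_+f_j\|^2,
\]
the inequality being elementary: with the positive contraction $M:=P_AP_+P_A$ on $L^2(A,\CCC^4)$ it reads $\sum_{j,k\le J}M_{jk}(1-M)_{kj}\ge\sum_{j\le J}\langle f_j,M(1-M)f_j\rangle$, the difference of the two sides being $\sum_{j\le J,\,k>J}|M_{jk}|^2\ge0$.

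As $J\to\infty$ the last sum increases to $\sum_{j=1}^{\infty}\|P_{A^c}P_+f_j\|^2=\|\mathbf 1_{A^c}P_+\mathbf 1_A\|_{\mathrm{HS}}^2=\tr\bigl(M(1-M)\bigr)$, and the heart of the proof is that this is $+\infty$ for a cube. This is the only step where the geometry of $A$ enters: the kernel of $P_+$ equals $\tfrac12\delta(\vx-\vy)+\tfrac12 G(\vx-\vy)$ with $G$ the inverse Fourier transform of the degree-zero symbol $(\boldsymbol{\alpha}\cdot\vp+\beta m)/\sqrt{\vp^2+m^2}$, so $G$ decays exponentially off the diagonal but is singular on it, $G(\vz)\sim c\,\boldsymbol{\alpha}\cdot\vz\,|\vz|^{-4}$ and hence $\|G(\vz)\|^2\asymp|\vz|^{-6}$ as $\vz\to0$; since $\partial A$ is a finite union of squares, of positive two-dimensional measure, a local computation near a face (reducing to $\int_0^{\varepsilon}\!\int_0^{\varepsilon}(s+t)^{-4}\,ds\,dt=\infty$) gives $\int_A\!\int_{A^c}\|P_+(\vx,\vy)\|^2\,d\vx\,d\vy=\infty$. (For $A=\vx+[-\pi,\pi]^3$ the divergence can also be obtained by hand with the Fourier ONB $f_{\vk,s}(\vy)=(2\pi)^{-3/2}\mathbf 1_A(\vy)e^{i\vk\cdot\vy}e_s$, $\vk\in\ZZZ^3$, $s=1,\dots,4$; and since none of the estimates depends on the choice of ONB, the theorem in fact holds for \emph{every} ONB of $L^2(A,\CCC^4)$.) This proves the statement for $\psi=|\Omega\rangle$.

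For a general $0\neq\psi\in F$ the plan is to reduce to the vacuum computation sector by sector. Decompose $\psi=\sum_{p,q}\psi_{p,q}$ into components of exactly $p$ electrons and $q$ positrons — a finite sum, since each vector of $F$ has bounded particle number — and choose $(p_0,q_0)$ with $\psi_{p_0,q_0}\neq0$ and $p_0+q_0$ maximal, so that $\psi_{p_0+1,q_0+1}=\psi_{p_0+2,q_0+2}=0$. In the $(p_0{+}1,q_0{+}1)$-component of $Q_{A,(f_j)}^J\psi=N_J\psi+T_J\psi+T_J^{*}\psi$, the sector-preserving $N_J$ and the lowering operator $T_J^{*}$ contribute nothing, whence $\|Q_{A,(f_j)}^J\psi\|^2\ge\|T_J\chi\|^2$ with $\chi:=\psi_{p_0,q_0}\neq0$. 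Expanding $\|T_J\chi\|^2$ by the CAR produces the ``vacuum'' term $\|\chi\|^2\sum_{j,k\le J}\langle u_j,u_k\rangle\langle v_j,v_k\rangle$, a nonnegative term $\|\sum_{j\le J}d(v_j)b(u_j)\chi\|^2$, and two corrections $-\tr(M_J R_J)$ and $-\tr(\Lambda_J S_J)$, where $\Lambda:=(\langle v_j,v_k\rangle)_{j,k}$, $R:=(\langle d(v_j)\chi,d(v_k)\chi\rangle)_{j,k}$, $S:=(\langle b(u_j)\chi,b(u_k)\chi\rangle)_{j,k}$ are positive operators and the subscript $J$ is truncation to the first $J$ indices. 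Here $\|M\|\le1$ and $\|\Lambda\|\le1$, while $R,S$ are trace class with $\tr R_J=\sum_{j\le J}\|d(v_j)\chi\|^2$ and $\tr S_J=\sum_{j\le J}\|b(u_j)\chi\|^2$; and each series converges — indeed $\sum_j\|d(v_j)\chi\|^2\le q_0\|\chi\|^2$, $\sum_j\|b(u_j)\chi\|^2\le p_0\|\chi\|^2$ — because $\sum_j d^*(v_j)d(v_j)=d\Gamma(B)$ with $B:=\sum_j|v_j\rangle\langle v_j|$ \emph{bounded} of norm $\le1$ (it is unitarily conjugate to $P_-P_AP_-$), so its expectation in the fixed $q_0$-positron vector $\chi$ is $\le q_0\|\chi\|^2$, and likewise for the $u_j$. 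Thus $0\le\tr(M_J R_J)\le q_0\|\chi\|^2$ and $0\le\tr(\Lambda_J S_J)\le p_0\|\chi\|^2$, so $\|T_J\chi\|^2\ge\|\chi\|^2\sum_{j\le J}\|P_{A^c}P_+f_j\|^2-(p_0+q_0)\|\chi\|^2\to\infty$.

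The step I expect to be the real obstacle is exactly this last estimate — keeping the two corrections bounded. One must \emph{not} bound $\sum_j\|d(v_j)\chi\|^2$ by $q_0\|\chi\|^2\sum_j\|v_j\|^2$, since $\sum_j\|v_j\|^2=\sum_j\|P_-f_j\|^2=\tr(P_-P_AP_-)=+\infty$; the estimate survives only because $\sum_j|v_j\rangle\langle v_j|$, though not trace class, is bounded — i.e.\ because $(v_j)_j$ is a Bessel system — so its second quantization is bounded on each fixed particle-number sector. The other point needing genuine work is the divergence $\tr(M(1-M))=\|\mathbf 1_{A^c}P_+\mathbf 1_A\|_{\mathrm{HS}}^2=\infty$ of Paragraph~2, which rests on the near-diagonal asymptotics of the kernel of the free positive-energy projection (standard pseudodifferential input) together with $\partial A$ having positive surface measure; the same input shows why passing to a continuous ``fuzzy'' profile $g$ does not help (cf.\ Remark~\ref{rem:g}), since $[P_+,g]$ fails to be Hilbert–Schmidt for every non-constant $g$.
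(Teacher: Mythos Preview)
Your proof is correct and shares the paper's overall architecture: isolate the pair-creation piece $T_J$, treat the vacuum first, then for general $\psi\in F$ project onto the top sector $(p_0{+}1,q_0{+}1)$ so that only $T_J\chi$ survives, and expand by the CAR into the divergent ``vacuum'' term plus bounded corrections. The two hard steps, however, you handle differently. For the vacuum divergence the paper works with the concrete Fourier ONB of the cube and an explicit Bessel-function integral (Lemmas~\ref{LemmaVacuum}--\ref{lem3}); you instead bound $\|T_J\Omega\|^2$ below by $\sum_{j\le J}\|P_{A^c}P_+f_j\|^2\to\tr\bigl(M(1-M)\bigr)=\|\mathbf 1_{A^c}P_+\mathbf 1_A\|_{\mathrm{HS}}^2$ and read off the divergence from the $|\vz|^{-6}$ singularity of the kernel of $P_+$ near the (flat, positive-area) boundary---this is basis-free and, as you observe, actually yields the conclusion for \emph{every} ONB of $L^2(A,\CCC^4)$, stronger than what the theorem asserts. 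For the correction terms the paper writes $\chi$ explicitly as a finite sum of product vectors and computes the resulting finite sums by hand; your route via $\tr(M_JR_J)\le\|M_J\|\tr R_J$ together with the Bessel-system observation $\sum_j|v_j\rangle\langle v_j|\le I$ (so that $\sum_j d^*(v_j)d(v_j)=d\Gamma(B)$ with $\|B\|\le 1$, hence expectation $\le q_0\|\chi\|^2$) is cleaner and avoids that bookkeeping. The only place your write-up is thinner than the paper's is the kernel asymptotics and the ensuing boundary integral, which the paper makes fully explicit in Lemmas~\ref{FunctionLambda} and~\ref{lem3}.
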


We give the proof in Section \ref{sec:thm1}.

\begin{rem}
One might think of defining the charge operator $Q_A$ in the following different way instead of \eqref{QAdef} (although, as far as we know, nobody actually proposed this definition):
\begin{equation}
    \tilde{Q}_A:=-e\sum_{j=1}^{\infty}\Bigl( b^*( f_j)b(f_j)-c^*( f_j)c( f_j) \Bigr),
\end{equation}
where  $(f_j)_{j \in \mathbb{N}}$ is again an ONB of $L^2(A,\mathbb{C}^4)$, and $b$ and $c$ are the electron and positron annihilation operators as in $\Psi(f)=b(f)+c^*(f)$ (see Section~\ref{sec:setup}). In fact, this operator is independent of the choice of ONB of $L^2(A,\CCC^4)$, it is well defined, self-adjoint, and additive in $A$,
\be
\tilde{Q}_{A_1\cup A_2}= \tilde{Q}_{A_1} + \tilde{Q}_{A_2}
\ee
whenever $A_1\cap A_2=\emptyset$; all of these are desired properties of charge operators. However, the spectrum $\sigma(\tilde{Q}_A)$ does not necessarily contain only integer multiples of the electron charge $-e$, and $\tilde{Q}_{A_1}$ and $\tilde{Q}_{A_2}$ do not in general commute with each other, even if $A_1$ and $A_2$ are disjoint; these properties speak against regarding $\tilde{Q}_A$ as the charge operator for the region $A$ and inhibit their use in the way proposed in \cite{Tumulka}.
\end{rem}

\section{Setup}
\label{sec:setup}

In this section, we briefly review the definition of the Hilbert space and field operators for the Dirac quantum field following \cite{Thaller}.

\subsection{Hilbert Space}

In the 1-particle Hilbert space $\mathcal{h}:=L^2(\RRR^3,\CCC^4)$, the 1-particle Hamiltonian is the free Dirac operator
\be
h_D = -i \sum_{a=1}^3 \gamma^0 \gamma^a \partial_a + m\gamma^0
\ee 
with $m\geq 0$ the mass and $c=1=\hbar$. It is self-adjoint with spectrum
\be
\sigma(h_D) = (-\infty,-m] \cup [m,\infty)\,,
\ee
so $\mathcal{h}$ can be decomposed into the orthogonal sum of the positive spectral subspace $\mathcal{h}_+$ and the negative spectral subspace $\mathcal{h}_-$,
\begin{equation}\label{DecompositionPosNeg}
    \mathcal{h}=\mathcal{h}_+ \oplus \mathcal{h}_-~.
\end{equation}
We write $P_\pm$ for the projection $\mathcal{h}\to \mathcal{h}_\pm$.
The \emph{charge conjugation operator}
$C$ \cite[p.~14]{Thaller} is an anti-unitary, anti-linear operator $\mathcal{h}\to \mathcal{h}$, given in the standard representation by
\begin{equation}
    Cf=i  \gamma^2\overline{f},
\end{equation}
where the bar means complex conjugation. $C$ is an involution, $C^2=I$, and maps $\mathcal{h}_+$ to $\mathcal{h}_-$ and vice versa, $C\mathcal{h}_\pm = \mathcal{h}_\mp$. 

The Hilbert space of the Dirac quantum field is \cite[Sec.~10.1]{Thaller} 
\begin{align}
\Hilbert
&=\mathcal{F}_-(\mathcal{h}_+)\otimes\mathcal{F}_-(C\mathcal{h}_-)\\
&=\bigoplus_{n,m=0}^\infty \mathcal{h}_+^{\wedge n}\otimes (C\mathcal{h}_-)^{\wedge m}\,,
\end{align}
where $\mathcal{F}_-$ is the fermionic Fock space \cite[Chap. 2]{AlgebraicQFT}, \cite[Sec. 10.1]{Thaller}, $\wedge n$ means the $n$-th anti-symmetrized tensor power, and $C\mathcal{h}_-$ is actually equal to $\mathcal{h}_+$. The \emph{vacuum vector} $\Omega\in\Hilbert$ is the one with $\Omega^{(0,0)}=1$ and $\Omega^{(n,m)}=0$ for $(n,m)\neq (0,0)$.


\subsection{Field Operators}

For $f \in \mathcal{h}$ and $\psi=\displaystyle \bigoplus_{n,m=0}^{\infty} \psi^{(n,m)} \in \Hilbert$, the creation operator for particles is defined as
\begin{equation}
(b^*(f)\psi)^{(n,m)}=\frac{1}{\sqrt{n}}\sum_{j=1}^{n}(-1)^{j+1}(P_+ f)\otimes_j^\mathrm{p}\psi^{(n-1,m)},
\end{equation}
where $g\otimes_j (f_1\otimes \cdots \otimes f_{n-1}):= f_1\otimes \cdots \otimes f_{j-1}\otimes g \otimes f_{j} \otimes \cdots \otimes f_{n-1}$, and the tensor product $\otimes_j^\mathrm{p}$ (p for ``particles'') only affects the first $n$ arguments. The creation operator for anti-particles is defined as 
\begin{equation}
        (c^*(f)\psi)^{(n,m)}=\frac{(-1)^n}{\sqrt{m}}\sum_{j=1}^{m}(-1)^{j+1}(CP_- f)\otimes_j^\mathrm{ap}\psi^{(n,m-1)},
\end{equation}
where the tensor product $\otimes_j^\mathrm{ap}$ (ap for ``anti-particles'') only affects the last $m$ arguments. 
The annihilation operator for particles is 
\begin{equation}
    (b(f)\psi)^{(n,m)}=\sqrt{n+1}\langle P_+ f, \psi^{(n+1,m)} \rangle_{1\mathrm{p}},
\end{equation}
where $\langle \hspace{3pt}, \hspace{3pt} \rangle_{1\mathrm{p}}: \mathcal{h}\times \mathcal{h}^{\otimes (n+1)} \to   \mathcal{h}^{\otimes n}$ is the partial inner product in the first argument, i.e., the sesquilinear extension of 
\begin{equation}\label{innerannihilation}
(\psi, \varphi_1 \otimes \cdots \otimes \varphi_{n+1})  \mapsto  \langle \psi, \varphi_1 \rangle \varphi_2\otimes \cdots \otimes \varphi_{n+1}\,,
\end{equation}
here restricted to the space $\mathcal{h}_+^{\wedge n}$. In the same way, the annihilation operator for antiparticles is
\begin{equation}
    (c(f)\psi)^{(n,m)}=(-1)^n\sqrt{m+1}\langle CP_- f, \psi^{(n,m+1)} \rangle_{1\mathrm{ap}},
\end{equation}
where  $\langle \hspace{3pt}, \hspace{3pt} \rangle_{1\mathrm{ap}}$ is the linear extension of \eqref{innerannihilation} restricted to $(C\mathcal{h}_-)^{\wedge m}$.
With this, the field operator $\Psi(f)$ is defined \cite{Thaller} as
\begin{equation}\label{Psidefbc}
    \Psi(f)=b(f)+c^*(f),
\end{equation}
and its adjoint as
\begin{equation}\label{Psi*defbc}
    \Psi^*(f)=b^*(f)+c(f),
\end{equation}
which satisfy the anticommutation relations
\begin{equation}
    \{\Psi(f),\Psi(g)\}=0=\{\Psi^*(f),\Psi^*(g)\}, \quad \{\Psi^*(f),\Psi(g)\}=\langle g,f \rangle id,
\end{equation}
for every $f,g \in \mathcal{h}$, while 
\begin{subequations}\label{bcanticommute}
\begin{align}
\{b(f),b(g)\}&=0=\{b^*(f),b^*(g)\}\,,~\{b^*(f),b(g)\} = \langle g,P_+f\rangle id,\\
\{c(f),c(g)\}&=0=\{c^*(f),c^*(g)\}\,,~\{c^*(f),c(g)\} = \langle g,P_-f\rangle^* id,\label{c*ccommute}\\
\{b^{\#}(f),c^{\%}(g)\}&=0\,,
\end{align}
\end{subequations}
where $\#$ and $\%$ mean either $*$ or no $*$.

The \emph{normal order} or \emph{Wick order} of an expression
\be
\sum_j \alpha_j A_{j1}\cdots A_{jr_j}
\ee
with finitely or infinitely many terms, complex coefficients $\alpha_j$, and each symbol
\be
A_{ji}\in \bigl\{b(f),b^*(f),c(f),c^*(f):f\in\hilbert\bigr\}
\ee
is defined to be
\be
:\sum_j \alpha_j A_{j1}\cdots A_{jr_j}:~~ 
=\sum_j \alpha_j ~:A_{j1}\cdots A_{jr_j}:~~
=\sum_j (-1)^{\rho_j} \alpha_j A_{j\rho_j(1)}\cdots A_{j\rho_j(r_j)}\,,
\ee
where $(-1)^{\rho}$ means the sign of the permutation $\rho$, and $\rho_j$ means any permutation such that in the expression $A_{j\rho_j(1)}\cdots A_{j\rho_j(r_j)}$, there is no $b^*(f)$ to the right of any $b(g)$ and no $c^*(f)$ to the right of any $c(g)$. (It is well known that for any two such permutations $\rho_j$, the expressions denote the same operator by virtue of the anticommutation relations \eqref{bcanticommute}.)

For expressions in the field operators (i.e., with $A_{ji}\in \{\Psi(f),\Psi^*(f):f\in\hilbert\}$), normal order is carried out after using \eqref{Psidefbc} and \eqref{Psi*defbc} to express $\Psi(f)$ and $\Psi^*(f)$ in terms of $b(f),b^*(f),c(f)$, and $c^*(f)$. Thus,
\begin{subequations}\begin{align}
:\Psi^*(f) \, \Psi(f): ~~
&= ~~:\bigl(b^*(f)+c(f)\bigr)\bigl(b(f)+c^*(f)\bigr):\\
&= ~~:\Bigl(b^*(f)b(f)+c(f)b(f)+b^*(f)c^*(f)+c(f)c^*(f)\Bigr):\\
&= ~~~~\;b^*(f)b(f)+c(f)b(f)+b^*(f)c^*(f)-c^*(f)c(f)\\
&= ~~\Psi^*(f) \, \Psi(f) - \|P_-f\|^2\label{Wick}
\end{align}
\end{subequations}
by \eqref{c*ccommute}.

\section{Cases That Work}
\label{sec:work}

Henceforth, we drop the factor $-e$ from the charge operators \eqref{QAdef}. In this section, we study expressions of the form
\be\label{QKdef}
Q_\kilbert := \sum_{j} : \Psi^*(f_j) \Psi(f_j) :~,
\ee
where $(f_j)_j$ is an ONB of the subspace $\kilbert\subseteq L^2(\RRR^3,\CCC^4)$. We show that they are well defined and well behaved if $\dim\kilbert<\infty$; in Section~\ref{sec:failure} we will address the convergence problems in the case $\dim\kilbert=\infty$.

\subsection{The $Q_\kilbert$ Operators}

Although we are ultimately interested in the $\infty$-dimensional subspace $\kilbert_A=L^2(A,\CCC^4)$ with $A\subset \RRR^3$ and $\vol(A)\neq 0$, $\vol(A^c)\neq 0$, we first focus on finite-dimensional ones. We verify that they have several properties that would appear reasonable for the desired $Q_A$ operators.

\begin{prop}\label{prop:finite}
Let $\kilbert\subset \mathcal{h}$ be a subspace of finite dimension $d$ and $f_1,\ldots,f_d$ an ONB of $\kilbert$. Then \eqref{QKdef} defines an operator $Q_\kilbert$ that is independent of the choice of ONB. It is bounded and self-adjoint with spectrum
\be\label{SpectrumQk}
\sigma(Q_\kilbert) = \{-d^-,-d^- +1, -d^- +2,\ldots, -d^- + d\}
\ee
with
\be
d^\pm = \tr(P_\kilbert P_\pm)\,.
\ee
\end{prop}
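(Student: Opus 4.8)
The plan is to work in a single finite-dimensional subspace $\kilbert$ with ONB $f_1,\dots,f_d$ and reduce everything to the anticommutation relations \eqref{bcanticommute} together with the Wick-ordering identity \eqref{Wick}. First I would write, using \eqref{Wick},
\be\nonumber
Q_\kilbert=\sum_{j=1}^d\bigl(\Psi^*(f_j)\Psi(f_j)-\|P_-f_j\|^2\bigr)
=\sum_{j=1}^d\bigl(b^*(f_j)b(f_j)-c^*(f_j)c(f_j)\bigr),
\ee
where the second equality follows by substituting $\Psi(f)=b(f)+c^*(f)$, $\Psi^*(f)=b^*(f)+c(f)$ and using $\{c(f_j),c^*(f_j)\}=\|P_-f_j\|^2\,\mathrm{id}$ from \eqref{c*ccommute} (and $\{b^\#,c^\%\}=0$ to kill the cross terms up to the constant). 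This already exhibits $Q_\kilbert$ as a finite sum of bounded operators ($b(f),b^*(f),c(f),c^*(f)$ are all bounded on fermionic Fock space, with norm $\le\|f\|$), hence bounded, and manifestly symmetric, hence self-adjoint.

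Next, ONB-independence: if $(\tilde f_k)_k$ is another ONB of $\kilbert$, write $\tilde f_k=\sum_j U_{kj}f_j$ with $U$ unitary; then $\sum_k b^*(\tilde f_k)b(\tilde f_k)=\sum_{j,l}\bigl(\sum_k \overline{U_{kj}}U_{kl}\bigr)b^*(f_j)b(f_l)=\sum_j b^*(f_j)b(f_j)$ by unitarity and the (anti)linearity of $f\mapsto b^*(f)$, $f\mapsto b(f)$, and similarly for the $c$-part; so $Q_\kilbert$ is well-defined. To identify the spectrum, I would choose the ONB cleverly: split $\kilbert$ relative to the decomposition $\mathcal h=\mathcal h_+\oplus\mathcal h_-$. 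This is the step needing a little care, since $P_+\kilbert$ and $P_-\kilbert$ need not be orthogonal, so one cannot simply pick a basis of $\kilbert$ adapted to $\mathcal h_\pm$. Instead I would diagonalize the pair of commuting projections, or more simply the self-adjoint operator $P_\kilbert P_+ P_\kilbert$ on $\kilbert$: pick an ONB $(f_j)$ of $\kilbert$ of eigenvectors, $P_\kilbert P_+ P_\kilbert f_j=\lambda_j f_j$ with $\lambda_j\in[0,1]$ and $\|P_+f_j\|^2=\lambda_j$, $\|P_-f_j\|^2=1-\lambda_j$. For this basis the operators $N_j^b:=b^*(f_j)b(f_j)$ mutually commute (they act on "mode $j$" only, via \eqref{bcanticommute}: $\{b^*(f_i),b(f_j)\}=\langle f_j,P_+f_i\rangle\,\mathrm{id}$, which is not diagonal in general — so the cleaner route is: the $N_j^b$ need not commute for a generic basis, but the full sum $N^b:=\sum_j b^*(f_j)b(f_j)$ equals $\mathrm{d}\Gamma(P_\kilbert P_+ )$ restricted appropriately, i.e. the second quantization of the positive operator $P_+P_\kilbert P_+$ on $\mathcal h_+$, whose spectrum in its range is $\{\lambda_j\}$, so $N^b$ has integer spectrum $\{0,1,\dots,d^+\}$ where $d^+=\mathrm{rank}$ contributions $=\tr(P_+P_\kilbert)=\tr(P_\kilbert P_+)$).

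Concretely, the cleanest argument: recognize $\sum_j b^*(f_j)b(f_j)=\mathrm{d}\Gamma(A_+)$ where $A_+=\sum_j |P_+f_j\rangle\langle P_+f_j|$ is a positive operator on $\mathcal h_+$ with $\tr A_+=\sum_j\|P_+f_j\|^2=\sum_j\langle f_j,P_+f_j\rangle=\tr(P_\kilbert P_+)=d^+$, and $A_+$ has rank $\le d$; likewise $\sum_j c^*(f_j)c(f_j)=\mathrm d\Gamma(A_-)$ on $C\mathcal h_-$ with $\tr A_-=d^-$. Since $A_+$ and $A_-$ act on the two distinct tensor factors of $\Hilbert=\mathcal F_-(\mathcal h_+)\otimes\mathcal F_-(C\mathcal h_-)$, $Q_\kilbert=\mathrm d\Gamma(A_+)\otimes I-I\otimes\mathrm d\Gamma(A_-)$. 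The spectrum of $\mathrm d\Gamma(A_\pm)$ on the fermionic Fock space is $\{\sum_{i\in S}\mu_i: S\subseteq\{\text{eigenvalues of }A_\pm\}\}$; I would then invoke the elementary fact that $A_+$ and $A_-$ together still give integer-valued sums — indeed $A_++$ (the obvious pairing) — no: the key observation making \eqref{SpectrumQk} an \emph{interval} of integers is that, writing $P_\kilbert=P_\kilbert P_+P_\kilbert+P_\kilbert P_-P_\kilbert$ and noting $A_++CA_-C^{-1}$ (transported back) $=P_+P_\kilbert P_++P_-P_\kilbert P_-$... The honest main obstacle is precisely this: showing the spectrum is the full set of \emph{consecutive} integers from $-d^-$ to $d^+-d^-=d-2d^-$... wait $d^++(-d^-)$ up to $0$, giving the stated $\{-d^-,\dots,-d^-+d\}$ requires $d^++d^-=d$, i.e. $\tr(P_\kilbert P_+)+\tr(P_\kilbert P_-)=\tr P_\kilbert=d$, which is immediate. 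Consecutiveness then follows because the spectrum of $\mathrm d\Gamma(A_+)$ contains all integers $0,1,\dots,\lfloor\tr A_+\rfloor=d^+$ once one knows $A_+$ has at least $d^+$ eigenvalues counted so that partial sums hit every integer — the cleanest sufficient route is to \emph{choose} the ONB so that $A_+$ is diagonal with eigenvalues $\lambda_1\ge\dots\ge\lambda_d$ in $[0,1]$ summing to $d^+$: then one checks by a short induction that $\{\sum_{i\in S}\lambda_i\}$ together with the corresponding $\{\sum_{i\in S'}(1-\lambda_i)\}$ for the $c$-part, combined as $\sum_{i\in S}\lambda_i-\sum_{i\in S'}(1-\lambda_i)$, ranges over exactly $\{-d^-,\dots,d^+\}$ and over no non-integers, using that each $f_j$ contributes either via its $b$-mode (eigenvalue $\lambda_j$) or $c$-mode (eigenvalue $1-\lambda_j$) and the extreme cases $S=\emptyset$, $S'$ all / $S$ all, $S'=\emptyset$ give the endpoints. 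I expect verifying this combinatorial statement — that the joint spectrum collapses to consecutive integers despite the $\lambda_j$ being arbitrary reals in $[0,1]$ — to be the one genuinely nontrivial point, and I would isolate it as a short lemma about $\mathrm d\Gamma$ of a positive contraction.
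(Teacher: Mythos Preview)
Your very first displayed identity is wrong, and the rest of the argument unravels from there. Expanding $\Psi^*(f_j)\Psi(f_j)=(b^*(f_j)+c(f_j))(b(f_j)+c^*(f_j))$ and subtracting $\|P_-f_j\|^2$ gives (as the paper records just above \eqref{Wick})
\[
:\Psi^*(f_j)\Psi(f_j):\;=\;b^*(f_j)b(f_j)+b^*(f_j)c^*(f_j)+c(f_j)b(f_j)-c^*(f_j)c(f_j),
\]
and the cross terms $b^*(f_j)c^*(f_j)+c(f_j)b(f_j)$ do \emph{not} vanish: the relation $\{b^\#(f),c^\%(g)\}=0$ says these operators anticommute, not that their products are zero. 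Nor do the cross terms cancel upon summing over the ONB; for instance $\sum_j b^*(f_j)c^*(f_j)$ applied to the vacuum is the generically nonzero vector $\sum_j P_+f_j\otimes CP_-f_j$ in the $(1,1)$ sector. Consequently $Q_\kilbert$ is \emph{not} of the form $\mathrm d\Gamma(A_+)\otimes I - I\otimes\mathrm d\Gamma(A_-)$, and your $\mathrm d\Gamma$ strategy for the spectrum collapses. The difficulty you sensed at the end---extracting integer eigenvalues from arbitrary $\lambda_j\in[0,1]$---is a symptom of this: with the cross terms present there is no reason for the eigenvalues of $A_\pm$ alone to determine the spectrum.

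The paper's route is quite different and avoids the $b,c$ decomposition entirely for the spectral part. One checks directly from the CAR that each $P_j:=\Psi^*(f_j)\Psi(f_j)$ is a \emph{projection} ($P_j^2=P_j$, self-adjoint) and that the $P_j$ mutually commute (four anticommutations of $\Psi^\#(f_i)$ past $\Psi^\#(f_j)$ for $i\neq j$). A sum of $d$ commuting projections has spectrum contained in $\{0,1,\dots,d\}$, and explicit eigenvectors $\Psi(f_d)\cdots\Psi(f_{j+1})\Psi^*(f_j)\cdots\Psi^*(f_1)\Omega$ show every value is attained. Since $Q_\kilbert=\sum_j P_j - \sum_j\|P_-f_j\|^2=\sum_j P_j - d^-$, the spectrum \eqref{SpectrumQk} follows at once. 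The integer spectrum thus falls out structurally, with no combinatorial lemma required. (Your basis-independence and boundedness/self-adjointness arguments are fine and essentially match the paper's.)
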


We give all proofs in Section~\ref{sec:proofs}. In analogy to the situation of Proposition 1, we would hope that $Q_A$ is self-adjoint with spectrum $\ZZZ$ because the amount of charge in $A$ should be an integer. 

\begin{prop}\label{prop:finiteperpQcommute}
If $\kilbert_1 \perp\kilbert_2$ are finite-dimensional subspaces of $\mathcal{h}$, then
\be
[Q_{\kilbert_1},Q_{\kilbert_2}]=0
\ee
and
\be\label{Qkilbertadd}
Q_{\kilbert_1\oplus \kilbert_2} = Q_{\kilbert_1}+Q_{\kilbert_2}\,.
\ee
\end{prop}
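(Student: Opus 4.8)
The plan is to work directly from the definition \eqref{QKdef} and the single-mode identity \eqref{Wick}, reducing everything to statements about the number operators $b^*(f_j)b(f_j)$ and $c^*(f_j)c(f_j)$ for an ONB adapted to $\kilbert_1$ and $\kilbert_2$.

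First I would choose a convenient ONB. Since $\kilbert_1\perp\kilbert_2$, pick an ONB $(f_i)_{i=1}^{d_1}$ of $\kilbert_1$ and an ONB $(g_k)_{k=1}^{d_2}$ of $\kilbert_2$; their concatenation is an ONB of $\kilbert_1\oplus\kilbert_2$. By Proposition~\ref{prop:finite}, $Q_{\kilbert_1}$, $Q_{\kilbert_2}$ and $Q_{\kilbert_1\oplus\kilbert_2}$ are all independent of the choice of ONB, so it suffices to verify the two claims for these particular bases. The additivity \eqref{Qkilbertadd} is then immediate: $Q_{\kilbert_1\oplus\kilbert_2}=\sum_{i}:\Psi^*(f_i)\Psi(f_i):+\sum_{k}:\Psi^*(g_k)\Psi(g_k):=Q_{\kilbert_1}+Q_{\kilbert_2}$, because the sum defining $Q_{\kilbert_1\oplus\kilbert_2}$ literally splits into the two sub-sums when one uses the concatenated ONB.

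For the commutator, I would use \eqref{Wick} to write each summand of $Q_{\kilbert_1}$ as $b^*(f_i)b(f_i)-c^*(f_i)c(f_i)$ (dropping the constant $-\|P_-f_i\|^2$, which is central and does not affect commutators), and similarly for $Q_{\kilbert_2}$ in terms of $b^*(g_k)b(g_k)-c^*(g_k)c(g_k)$. So it reduces to showing that $[b^*(f)b(f),b^*(g)b(g)]=0$ and $[c^*(f)c(f),c^*(g)c(g)]=0$ whenever $f\perp g$, together with the vanishing of the mixed $b$–$c$ commutators (which is automatic from $\{b^\#,c^\%\}=0$). The $b$-part is a standard computation with the CAR: using $\{b^*(f),b(g)\}=\langle g,P_+f\rangle\,\mathrm{id}$, and noting $\langle f,P_+g\rangle=\langle P_+f,g\rangle=0$ since $f\perp g$ implies $P_+f\perp P_+g$ (as $f,g\in\kilbert_1,\kilbert_2$ are orthogonal and $P_+$ is an orthogonal projection—more carefully, $\langle P_+f,g\rangle=\langle f,P_+g\rangle$ need not vanish in general, so I would instead argue directly that the four operators $b^\#(f),b^\%(g)$ pairwise anticommute up to scalars proportional to $\langle g,P_+f\rangle$ and $\langle f,P_+g\rangle$, both of which I must check vanish). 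The cleanest route is: since $\kilbert_1\perp\kilbert_2$ and $P_+$ maps $\kilbert_i$ into fixed subspaces, one has $\langle P_+f_i,P_+g_k\rangle$ need not be zero, so I would instead replace $(f_i),(g_k)$ at the outset by an ONB of $\kilbert_1\oplus\kilbert_2$ that is simultaneously adapted to the $P_+/P_-$ decomposition restricted to this finite-dimensional space; alternatively, observe that $Q_{\kilbert}$ depends only on the pair of subspaces and expand $Q_{\kilbert_i}$ in the abstract number-operator form $\sum_\ell N^+_\ell-N^-_\ell$ over any orthonormal eigenbasis of $P_+\!\restriction_{\kilbert_i}$, where distinct number operators commute by the CAR.

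The main obstacle I anticipate is precisely this bookkeeping with $P_\pm$: the field operator $\Psi(f)=b(f)+c^*(f)$ involves $P_+f$ and $P_-f$, and orthogonality of $f$ and $g$ in $\mathcal{h}$ does \emph{not} by itself give orthogonality of $P_+f$ and $P_+g$. So the honest argument must either (i) diagonalize $P_\kilbert P_+ P_\kilbert$ on the finite-dimensional space $\kilbert_1\oplus\kilbert_2$ and choose the ONB accordingly—but then it may fail to respect the splitting into $\kilbert_1$ and $\kilbert_2$—or (ii) reformulate $Q_\kilbert$ in a manifestly basis-free way, e.g. as $Q_\kilbert=\sum_j\bigl(b^*(P_+f_j)b(P_+f_j)-c^*(P_-f_j)c(P_-f_j)\bigr)$ plus constant, and then argue via the second-quantization functor $d\Gamma$ that $Q_\kilbert=d\Gamma(P_\kilbert P_+)\otimes I - I\otimes d\Gamma(C P_\kilbert P_- C^{-1})$ on $\mathcal{F}_-(\mathcal{h}_+)\otimes\mathcal{F}_-(C\mathcal{h}_-)$, whence $[Q_{\kilbert_1},Q_{\kilbert_2}]=d\Gamma([P_{\kilbert_1}P_+,P_{\kilbert_2}P_+])\oplus(\cdots)$. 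The bracket $[P_{\kilbert_1}P_+,P_{\kilbert_2}P_+]$ vanishes because $P_{\kilbert_1}P_{\kilbert_2}=0=P_{\kilbert_2}P_{\kilbert_1}$ and $[P_{\kilbert_i},P_+]$... this still is not obviously zero, so in the end I expect the cleanest proof computes $P_{\kilbert_1}P_+P_{\kilbert_2}P_+$ and its reverse and shows their difference annihilates everything using $\kilbert_1\perp\kilbert_2$ only after projecting—which is exactly the delicate point to get right. I would therefore present the $d\Gamma$ reformulation as the backbone and isolate the operator-identity $[d\Gamma(PA P),d\Gamma(PBP)]=d\Gamma(P[A,B]P)$-type lemma, reducing the whole proposition to $P_{\kilbert_1}P_+P_{\kilbert_2}=0$ as operators, which does hold since $P_+P_{\kilbert_2}$ has range in $\kilbert_2\perp\kilbert_1$... and this is the step I would scrutinize most carefully before committing to the write-up.
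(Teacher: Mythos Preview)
Your additivity argument is fine and is exactly what the paper does. The commutator argument, however, contains a genuine error and then spirals because of it.

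The mistake is in the line ``use \eqref{Wick} to write each summand of $Q_{\kilbert_1}$ as $b^*(f_i)b(f_i)-c^*(f_i)c(f_i)$.'' That is not what \eqref{Wick} says. The normal-ordered expression is (see the display just above \eqref{Wick})
\[
:\Psi^*(f)\Psi(f):\;=\;b^*(f)b(f)+c(f)b(f)+b^*(f)c^*(f)-c^*(f)c(f),
\]
so there are off-diagonal terms $b^*(f)c^*(f)$ and $c(f)b(f)$ that you have dropped. Once those are present, $Q_\kilbert$ is \emph{not} of the form $d\Gamma(P_\kilbert P_+)\otimes I-I\otimes d\Gamma(CP_\kilbert P_-C^{-1})$, and the whole $d\Gamma$ strategy collapses. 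Your last-ditch claim that $P_{\kilbert_1}P_+P_{\kilbert_2}=0$ is also false: the range of $P_+P_{\kilbert_2}$ lies in $\hilbert_+$, not in $\kilbert_2$, so there is no reason for $P_{\kilbert_1}$ to kill it. (A two-dimensional counterexample: take $u\in\hilbert_+$, $v\in\hilbert_-$ of equal norm, $f=u+v$, $g=u-v$; then $f\perp g$ but $\langle f,P_+g\rangle=\|u\|^2\neq0$.)

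The paper avoids all of this by never splitting into $b$'s and $c$'s. It uses \eqref{Wick} only to replace $:\Psi^*(f_j)\Psi(f_j):$ by $\Psi^*(f_j)\Psi(f_j)$ minus a scalar, and then computes the commutator $[\Psi^*(f)\Psi(f),\Psi^*(g)\Psi(g)]$ directly from the CAR for the \emph{field} operators, $\{\Psi^*(f),\Psi(g)\}=\langle g,f\rangle\,\mathrm{id}$. One finds
\[
[\Psi^*(f)\Psi(f),\Psi^*(g)\Psi(g)]=\langle f,g\rangle\,\Psi^*(f)\Psi(g)-\langle g,f\rangle\,\Psi^*(g)\Psi(f),
\]
which vanishes because $\langle f,g\rangle=0$ for $f\in\kilbert_1$, $g\in\kilbert_2$. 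The point you were groping for---that orthogonality in $\hilbert$ need not survive projection by $P_\pm$---is precisely why one should stay with $\Psi,\Psi^*$: their anticommutator involves the full inner product, which \emph{does} vanish by hypothesis.
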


Note that if $A_1\cap A_2=\emptyset$, then $L^2(A_1,\CCC^4)\perp L^2(A_2,\CCC^4)$ and $L^2(A_1\cup A_2,\CCC^4)=L^2(A_1,\CCC^4) \oplus L^2(A_2,\CCC^4)$. We would want in that case that $[Q_{A_1},Q_{A_2}]=0$ and $Q_{A_1\cup A_2}=Q_{A_1} + Q_{A_2}$ (charge content is additive).

\begin{prop}\label{prop:finiteQcommute}
If $\kilbert_1,\kilbert_2$ are finite-dimensional subspaces of $\mathcal{h}$ and $\kilbert_1=\mathcal{A}\oplus \mathcal{B}$, $\kilbert_2=\mathcal{A}\oplus \mathcal{C}$ with $\mathcal{A}, \mathcal{B}, \mathcal{C}$ mutually orthogonal, then 
\be
[Q_{\kilbert_1},Q_{\kilbert_2}]=0\,.
\ee
\end{prop}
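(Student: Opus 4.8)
The plan is to reduce the statement entirely to Proposition~\ref{prop:finiteperpQcommute} (together with Proposition~\ref{prop:finite} for boundedness), avoiding any direct manipulation of the anticommutation relations. First I would observe that, being subspaces of the finite-dimensional spaces $\kilbert_1$ and $\kilbert_2$, the three pieces $\mathcal{A},\mathcal{B},\mathcal{C}$ are themselves finite-dimensional, so $Q_{\mathcal{A}},Q_{\mathcal{B}},Q_{\mathcal{C}},Q_{\kilbert_1},Q_{\kilbert_2}$ are all well-defined bounded self-adjoint operators by Proposition~\ref{prop:finite}; in particular no domain questions arise when forming commutators, and the commutator bracket is bilinear on these operators.

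Next I would invoke the additivity identity \eqref{Qkilbertadd}: since $\mathcal{A}\perp\mathcal{B}$ it gives $Q_{\kilbert_1}=Q_{\mathcal{A}\oplus\mathcal{B}}=Q_{\mathcal{A}}+Q_{\mathcal{B}}$, and since $\mathcal{A}\perp\mathcal{C}$ it gives $Q_{\kilbert_2}=Q_{\mathcal{A}\oplus\mathcal{C}}=Q_{\mathcal{A}}+Q_{\mathcal{C}}$. Expanding the commutator bilinearly then yields
\be
[Q_{\kilbert_1},Q_{\kilbert_2}]=[Q_{\mathcal{A}},Q_{\mathcal{A}}]+[Q_{\mathcal{A}},Q_{\mathcal{C}}]+[Q_{\mathcal{B}},Q_{\mathcal{A}}]+[Q_{\mathcal{B}},Q_{\mathcal{C}}]\,.
\ee
The first summand is $0$ trivially, and in each of the remaining three the two subspaces involved are orthogonal by hypothesis ($\mathcal{A}\perp\mathcal{C}$, $\mathcal{B}\perp\mathcal{A}$, $\mathcal{B}\perp\mathcal{C}$), so the commuting half of Proposition~\ref{prop:finiteperpQcommute} makes each of them vanish. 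Summing, $[Q_{\kilbert_1},Q_{\kilbert_2}]=0$, as claimed.

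I do not expect a genuine obstacle here: the mathematical content is already packaged in Proposition~\ref{prop:finiteperpQcommute}, and what remains is only the bookkeeping of the bilinear expansion and the (routine) check that the three cross-commutators are between mutually orthogonal subspaces. If one instead wanted a self-contained argument, one could pick an ONB of $\kilbert_1$ that is the union of ONBs of $\mathcal{A}$ and $\mathcal{B}$, an ONB of $\kilbert_2$ that is the union of ONBs of $\mathcal{A}$ and $\mathcal{C}$, and compute $[Q_{\kilbert_1},Q_{\kilbert_2}]$ directly from the anticommutation relations \eqref{bcanticommute}, but that route is strictly longer and the reduction above is cleaner.
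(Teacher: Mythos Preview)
Your proposal is correct and follows essentially the same route as the paper: decompose $Q_{\kilbert_1}=Q_{\mathcal{A}}+Q_{\mathcal{B}}$ and $Q_{\kilbert_2}=Q_{\mathcal{A}}+Q_{\mathcal{C}}$ via the additivity in Proposition~\ref{prop:finiteperpQcommute}, expand the commutator bilinearly into four terms, and kill each using either triviality or the commuting half of Proposition~\ref{prop:finiteperpQcommute}. The only difference is that you make the boundedness/domain remark explicit via Proposition~\ref{prop:finite}, which the paper leaves implicit.
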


Note that if $A_1,A_2 \subset \RRR^3$ and $\kilbert_i=L^2(A_i,\CCC^4)$ (which of course is $\infty$-dimensional), then $\kilbert_1=\mathcal{A}\oplus \mathcal{B}$, $\kilbert_2=\mathcal{A}\oplus \mathcal{C}$ with $\mathcal{A}, \mathcal{B}, \mathcal{C}$ mutually orthogonal, as $\mathcal{A}=L^2(A_1\cap A_2,\CCC^4)$, $\mathcal{B}=L^2(A_1\setminus A_2,\CCC^4)$, $\mathcal{C}=L^2(A_2\setminus A_1,\CCC^4)$. Correspondingly, we would also expect of the desired $Q_A$ operators that $[Q_{A_1},Q_{A_2}]=0$.

We also observe that $L^2(A,\CCC^4)$ is invariant under the charge conjugation operator $C$,
\be\label{ACinv}
CL^2(A,\CCC^4)=L^2(A,\CCC^4).
\ee

\begin{prop}\label{prop:d2}
If $\kilbert\subset \mathcal{h}$ is finite-dimensional and $C\kilbert=\kilbert$, then $d^+=d^-=d/2$, so $\sigma(Q_{\kilbert})=\{-\tfrac{d}{2},-\tfrac{d}{2}+1,\ldots, \tfrac{d}{2}-1, \tfrac{d}{2}\}$. In particular, $\sigma(Q_{\kilbert})\subset \ZZZ$ for even $d$ and $\sigma(Q_{\kilbert})\subset \tfrac{1}{2}+\ZZZ$ for odd $d$.
\end{prop}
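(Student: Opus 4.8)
The plan is to compute $d^\pm=\tr(P_\kilbert P_\pm)$ directly and to exploit the symmetry supplied by the charge conjugation operator $C$.

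First I would rewrite $d^\pm$ in terms of an ONB. Fix any ONB $f_1,\dots,f_d$ of $\kilbert$. Using cyclicity of the trace together with $P_\kilbert f_j=f_j$, one gets
\be
d^\pm = \tr(P_\kilbert P_\pm) = \tr(P_\kilbert P_\pm P_\kilbert) = \sum_{j=1}^d \langle f_j, P_\pm f_j\rangle = \sum_{j=1}^d \|P_\pm f_j\|^2 .
\ee
Since $\mathcal{h}=\mathcal{h}_+\oplus\mathcal{h}_-$ is an \emph{orthogonal} decomposition, $\|P_+f_j\|^2+\|P_-f_j\|^2=\|f_j\|^2=1$, and summing over $j$ gives $d^++d^-=d$. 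Hence it only remains to prove $d^+=d^-$.

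For this I would use the three properties of $C$ recorded in Section~\ref{sec:setup}: $C$ is anti-unitary (in particular norm-preserving), $C\mathcal{h}_\pm=\mathcal{h}_\mp$, and — the hypothesis — $C\kilbert=\kilbert$. From $C\mathcal{h}_\pm=\mathcal{h}_\mp$ one checks $P_+C=CP_-$ (both sides annihilate $\mathcal{h}_+$ and equal $C$ on $\mathcal{h}_-$), and from $C\kilbert=\kilbert$ together with anti-unitarity, $(Cf_j)_{j=1}^d$ is again an ONB of $\kilbert$. Applying the displayed formula to this new ONB therefore yields
\be
d^+ = \sum_{j=1}^d \|P_+ Cf_j\|^2 = \sum_{j=1}^d \|CP_- f_j\|^2 = \sum_{j=1}^d \|P_- f_j\|^2 = d^- ,
\ee
and combined with $d^++d^-=d$ this gives $d^+=d^-=d/2$. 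Substituting $d^-=d/2$ into the spectrum formula \eqref{SpectrumQk} of Proposition~\ref{prop:finite} produces $\sigma(Q_\kilbert)=\{-\tfrac d2,-\tfrac d2+1,\dots,\tfrac d2\}$, and the ``in particular'' claim is immediate since $d/2\in\ZZZ$ when $d$ is even and $d/2\in\tfrac12+\ZZZ$ when $d$ is odd.

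I do not expect a genuine obstacle here; the argument is short. The only point needing a little care is the anti-linear bookkeeping — verifying that $(Cf_j)_j$ is an ONB of $\kilbert$ and that $P_+C=CP_-$, and noting that $d^\pm$ depends only on $P_\kilbert$ (consistent with the ONB-independence in Proposition~\ref{prop:finite}), so that evaluating the trace with the $C$-rotated basis is legitimate. As an alternative to the basis computation, the same conclusion follows from the identity $\tr(CTC^{-1})=\overline{\tr T}$ for anti-unitary $C$ applied to $T=P_\kilbert P_+$, combined with $CP_\kilbert C^{-1}=P_\kilbert$, $CP_+C^{-1}=P_-$, and the fact that $\tr(P_\kilbert P_+)$ is real.
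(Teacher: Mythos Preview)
Your proof is correct and follows essentially the same route as the paper's: both use that $C$ interchanges $P_+$ and $P_-$ while preserving $\kilbert$ to conclude $d^+=d^-$, then invoke the spectrum formula from Proposition~\ref{prop:finite}. The only cosmetic difference is that the paper phrases the key step as a trace identity (using $CP_\kilbert=P_\kilbert C$ and $CP_-C=P_+$), whereas you unpack the trace via an ONB and the observation that $(Cf_j)_j$ is again an ONB of $\kilbert$; your version is arguably the cleaner of the two since it sidesteps any bookkeeping about cycling anti-linear factors inside a trace.
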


These facts about $Q_\kilbert$ motivate us to consider \eqref{QAdef} as the definition of $Q_A$ and to study whether the definition \eqref{QKdef} of $Q_\kilbert$ can be extended to $\infty$-dimensional $\kilbert$.

\subsection{The Total Charge Operator}
\label{sec:totalcharge}

For $A=\RRR^3$, the desired operator would represent the total charge. Such an operator is well known to exist \cite[Sec.~10.2.3]{Thaller}, it can be written as
\be\label{Qdef}
Q= \sum_j \Bigl( b^*(f_j)b(f_j)- c^*(f_j) c(f_j) \Bigr)\,,
\ee
where $(f_j)_j$ is the union of an ONB of $\hilbert_+$ and one of $\hilbert_-$ in some order, it is independent of the choice of these ONBs and the overall ordering, it is self-adjoint with spectrum
\be
\sigma(Q)=\ZZZ\,,
\ee
and its eigenspace with eigenvalue $q\in\ZZZ$ is the sum of all $(m,n)$-sectors of $\Hilbert$ with $n-m=q$. It is easy to see that $Q$ can also be written in the form $\sum_j :\Psi^*(f_j) \Psi(f_j):$ if, as before, each $f_j$ lies in either $\hilbert_+$ or $\hilbert_-$. This is another reason  that would seem to support the tentative definition \eqref{QAdef} of $Q_A$.

We may also note in passing that if $\vol(A)=0$, then $\kilbert=L^2(A,\CCC^4)=\{0\}$, and \eqref{QKdef} yields the reasonable outcome $Q_A=0$. Likewise, if $\vol(A^c)=0$, then $\kilbert=\mathcal{h}$, and \eqref{QKdef} yields, if each $f_j$ lies in either $\hilbert_+$ or $\hilbert_-$, the reasonable outcome $Q_A=Q_{\RRR^3}=Q$.

\subsection{Subspaces Aligned with $\hilbert_\pm$}

We briefly turn to another family of subspaces $\kilbert\subseteq \hilbert$ for which we can show that \eqref{QKdef} actually defines an operator (at least for certain bases $(f_j)_j$): these are the $\kilbert$ aligned with $\hilbert_\pm$, i.e., for which the angle between $\kilbert$ and $\hilbert_\pm$ is $90^\circ$ (like between the $xy$ and $xz$ planes in $\RRR^3$) or, put differently, $[P_\kilbert,P_\pm]=0$ or, put again differently, $\kilbert$ is an orthogonal sum $\kilbert=\kilbert_+ \oplus \kilbert_-$ with $\kilbert_\pm \subseteq \hilbert_\pm$. Here, $\dim \kilbert$ may be infinite, and the total charge operator $Q$ of Section~\ref{sec:totalcharge} is included as the special case $\kilbert=\hilbert$. (Of course, $\kilbert_A=L^2(A,\CCC^4)$ is in general not aligned with $\hilbert_\pm$ because $\kilbert_A\cap \hilbert_\pm =\{0\}$ whenever both $A$ and $A^c$ have non-empty interior, as no nonzero function from $\hilbert_\pm$ can vanish on a non-empty open set \cite[Cor.~1.7]{Thaller}.)

So let $\kilbert$ be aligned with $\hilbert_\pm$ and  let $(f_j)_j$ be an ONB of $\kilbert$ so that every $f_j$ belongs to either $\kilbert_+$ or $\kilbert_-$. If $f_j\in \kilbert_+$, then
\be
:\Psi^*(f_j)\, \Psi(f_j): ~~=~~ b^*(f_j) \, b(f_j).
\ee
If $f_j\in \kilbert_-$, then 
\be
:\Psi^*(f_j)\, \Psi(f_j): ~~=~ -c^*(f_j) \, c(f_j).
\ee
It follows that the series \eqref{QKdef} converges, in fact in any order, to 
\be
Q_\kilbert = N_{\kilbert_+}-N_{\kilbert_-}\,,
\ee
where $N_{\kilbert_\pm}$ is the number operator associated with the appropriate subspace. Moreover, $Q_\kilbert$ is a self-adjoint operator with spectrum in $\ZZZ$. (This confirms also the statements made in Section~\ref{sec:totalcharge} about the total charge operator.)

\section{Failure of Convergence}
\label{sec:failure}

We now look at the question of convergence of the series \eqref{QAdef} that we consider as the definition of $Q_A$ and explain why there is not much hope of finding a sense in which it might converge.

Let $(f_j)_j$ be again an ONB of $L^2(A,\CCC^4)$ with $\vol(A)\neq 0 \neq \vol(A^c)$. As we will see, the $j$-th term in the series \eqref{QAdef},
\be
T_j:= ~~:\Psi^*(f_j) \, \Psi(f_j):~~,
\ee
is a bounded self-adjoint operator on $\Hilbert$, and all $T_j$ commute with each other,
\be
[T_i,T_j]=0 ~~\forall i,j\in\NNN.
\ee
It follows \cite[Chap.~2 Thm.~1.3, p.~122]{Ber86} that they can be diagonalized simultaneously through a projection-valued measure (PVM) $P$ on a subset $\sR$ of $\RRR^\NNN$ (equipped with the $\sigma$-algebra generated by the cylinder sets). Each $T_j$ can then be expressed in terms of $P$ like a multiplication operator,
\be
T_j = \int_{\sR}P(\mathrm{d}\lambda)\, \tau_j(\lambda) 
\ee
with $\tau_j:\sR\to\sigma(T_j)$. This suggests that the series $\sum_{j=1}^\infty T_j$ converges in some sense if and only if the series
\be\label{tauseries}
\sum_{j=1}^\infty \tau_j
\ee
converges in a suitable sense, and we now explain why it seems implausible that \eqref{tauseries} could converge in any sense. 

As a preparation, we note that $T_j$ has only two eigenvalues, which are $\|P_+f_j\|^2$ and $-\|P_-f_j\|^2$, and can be written as
\be
T_j = \|P_+f_j\|^2 P_j -\|P_-f_j\|^2 P_j^\perp\,,
\ee
where
\be
P_j = \Psi^*(f_j) \, \Psi(f_j)
\ee
is a projection (as we verify in the proof of Proposition~\ref{prop:finite} in Section~\ref{sec:proofs}), and $P_j^\perp=I-P_j$ is the projection to the orthogonal complement of the range of $P_j$. Put differently, $P_j$ and $P_j^\perp$ are the eigenprojections to the two eigenvalues of $T_j$.

It was part of the hope that $Q_A$ should not depend on the choice of the ONB $(f_j)_j$. We can simplify the situation further by choosing $(f_j)_j$ in a special way:

\begin{prop}\label{prop:C}
Let $\kilbert$ be any separable Hilbert space (of finite or infinite dimension) and $C:\kilbert\to\kilbert$ any anti-unitary operator that is an involution, $C^2=I$. Then there is an ONB $(f_j)_j$ of $\kilbert$ such that the action of $C$ is just complex conjugation of the coefficients,
\be\label{Cfj}
C\Bigl( \sum_j c_j \, f_j \Bigr) = \sum_j c_j^* \, f_j
\ee
for arbitrary $c_j\in\CCC$ with $\sum_j |c_j|^2 < \infty$.
\end{prop}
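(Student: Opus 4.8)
The plan is to reduce the claim to a structural fact about antiunitary involutions on a separable Hilbert space, namely that such a $C$ admits a ``real structure'': there exists a closed real-linear subspace $V\subseteq\kilbert$ (the fixed-point set of $C$) such that $\kilbert = V \oplus iV$ as a real Hilbert space, $V\cap iV=\{0\}$, and the real inner product restricted to $V$ makes $V$ a real Hilbert space whose complexification is $\kilbert$. Concretely, set $V := \{\,v\in\kilbert : Cv=v\,\}$. First I would check that $V$ is a closed real-linear subspace: closedness follows from continuity of $C$ (antiunitaries are bounded), and real-linearity is immediate since $C$ is additive and conjugate-linear, so $C(v+w)=Cv+Cw$ and $C(tv)=\bar t\,Cv = t\,Cv$ for real $t$. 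Next, for any $x\in\kilbert$ write $x = \tfrac12(x+Cx) + \tfrac12(x-Cx)$; using $C^2=I$ one sees $C(x+Cx)=Cx+x$ so the first summand lies in $V$, and $C(x-Cx)=Cx-x=-(x-Cx)$ so $i^{-1}(x-Cx)\in V$, i.e.\ the second summand lies in $iV$. This gives $\kilbert = V + iV$; the intersection is trivial because $v\in V\cap iV$ means $Cv=v$ and $C(v)=C(i w)=-i C w = -i w = -v$ for the $w$ with $v=iw\in V$, forcing $v=0$. Finally, antiunitarity $\langle Cx,Cy\rangle = \langle y,x\rangle$ shows that for $v,w\in V$ the quantity $\langle v,w\rangle$ equals its own complex conjugate, hence is real, so $\langle\cdot,\cdot\rangle$ is a genuine real inner product on $V$, and $V$ with this inner product is a separable real Hilbert space.

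The second step is then routine: choose an orthonormal basis $(f_j)_j$ of the real Hilbert space $V$ — separability guarantees one exists (finite or countable). I claim this is also an orthonormal basis of the complex Hilbert space $\kilbert$. Orthonormality is inherited since the complex and real inner products agree on $V$. For completeness: given $x\in\kilbert$ with $\langle f_j,x\rangle=0$ for all $j$, decompose $x = v + iw$ with $v,w\in V$ as above; then $0=\langle f_j,x\rangle = \langle f_j,v\rangle + i\langle f_j,w\rangle$, and since $\langle f_j,v\rangle$ and $\langle f_j,w\rangle$ are real this forces both to vanish for all $j$, so $v=w=0$ by completeness of $(f_j)_j$ in $V$, hence $x=0$. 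Thus $(f_j)_j$ is a complete orthonormal system in $\kilbert$. Because each $f_j\in V$ satisfies $Cf_j=f_j$, conjugate-linearity of $C$ together with its continuity gives, for any square-summable $(c_j)$,
\be
C\Bigl(\sum_j c_j f_j\Bigr) = \sum_j \bar c_j\, Cf_j = \sum_j \bar c_j\, f_j\,,
\ee
which is exactly \eqref{Cfj}; in the paper's notation $\bar c_j = c_j^*$.

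The only genuine subtlety — and the step I would flag as the main obstacle — is justifying the termwise application of $C$ to the infinite sum $\sum_j c_j f_j$, i.e.\ commuting $C$ with the limit. This is handled by continuity: $C$ is bounded (indeed isometric), so $C(\sum_{j\le N} c_j f_j)\to C(\sum_j c_j f_j)$; and $C(\sum_{j\le N} c_j f_j) = \sum_{j\le N}\bar c_j f_j$ by finite conjugate-linearity, while $\sum_{j\le N}\bar c_j f_j \to \sum_j \bar c_j f_j$ since $(\bar c_j)$ is again square-summable. A secondary point worth stating carefully is that $V$ is nonzero and ``large enough'' — that $\kilbert = V\oplus iV$ rather than $V$ being some small subspace — but this is precisely the content of the decomposition $x=\tfrac12(x+Cx)+\tfrac12(x-Cx)$ above, which works for every $x$. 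I would also remark that this is the standard construction of a real form associated to a conjugation operator, so the argument is classical; the proposition just records it in the form needed for the choice of ONB used elsewhere (e.g.\ in applying it to $C$ on $L^2(A,\CCC^4)$, which is legitimate by \eqref{ACinv}).
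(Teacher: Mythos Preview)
Your proof is correct and takes a genuinely different route from the paper's. The paper proceeds constructively: it starts from an arbitrary ONB $(g_i)_i$ of $\kilbert$ and, by induction on $n$, produces a $C$-invariant ONB of $V_n:=\mathrm{span}\{g_1,Cg_1,\ldots,g_n,Cg_n\}$, treating in each step a trichotomy according to whether $Cg$ is parallel to $g$, orthogonal to $g$, or neither (the last case handled via $u=\alpha g+\alpha^*Cg$, $v=i\alpha g-i\alpha^*Cg$ with $\alpha^2=\langle g,Cg\rangle$). The authors emphasise that the delicate point is ensuring completeness in the infinite-dimensional case --- they explicitly note that the proofs of Wigner and Ruotsalainen have a gap there --- and they close it by keeping track of the increasing chain $V_n$ and verifying $\bigcup_n V_n\supseteq\mathrm{span}\{g_1,g_2,\ldots\}$.

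Your argument bypasses all of this by observing that an anti-unitary involution is precisely a real structure: the fixed-point set $V=\{v:Cv=v\}$ is a closed real form of $\kilbert$ with $\kilbert=V\oplus iV$, the complex inner product restricts to a real one on $V$, and any real ONB of $V$ automatically serves as the desired complex ONB of $\kilbert$. This is cleaner and avoids both the case analysis and the completeness subtlety that the paper has to repair; the paper's approach, on the other hand, is more explicit about how to manufacture $C$-invariant vectors from given ones. The one point you leave implicit is the separability of $V$ as a real Hilbert space, but this is immediate since $V$ inherits the metric of $\kilbert$ and subspaces of separable metric spaces are separable.
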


In particular, if $\kilbert$ is a closed subspace of $\hilbert=L^2(\RRR^3,\CCC^4)$ that is invariant under the charge conjugation operator $C$ (such as $\kilbert=L^2(A,\CCC^4)$), then there is an ONB $(f_j)_j$ of $\kilbert$ with $Cf_j=f_j$ for each $j$.

For such a basis, it follows that
\be
\|P_+f_j\|^2 = \frac{1}{2} = \|P_-f_j\|^2.
\ee
Indeed, using that $CP_-C=P_+$ and $CC=I$, we find that 
\be
CP_-f_j = CP_-CCf_j = P_+ Cf_j= P_+f_j\,,
\ee
and since $C$ is anti-unitary,
\be
\|P_-f_j\|^2=\|CP_-f_j\|^2=\| P_+ f_j\|^2
\ee
while
\be
\| P_+ f_j\|^2 +\|P_- f_j\|^2=\|f_j\|^2=1\,.
\ee

For the convergence question, this means that
\be
T_j = \tfrac12 P_j -\tfrac12 P_j^\perp\,.
\ee
Now we return to the convergence of \eqref{tauseries}. At every point in $\sR$, this series has all summands equal to either $\tfrac12$ or $-\tfrac12$, and therefore cannot converge. (If it did, then for every $\varepsilon>0$ (such as $\varepsilon=1/5$), the sequence of partial sums would have to stay, from some term onward, within the $\varepsilon$-neighborhood of the limit, but that is impossible if the partial sum changes with the next term by $1/2$.) 

\begin{rem}
    In particular, for the case $A=\mathbb{R}^3$ this argument suggests that the operator $Q_A$ depends on the ONB, since the series \eqref{QAdef} converges (to $Q$ as in \eqref{Qdef}) if $(f_j)_j$ is the union of an ONB of $\hilbert_+$ and one of $\hilbert_-$, but the argument suggests that the series does not converge if $(f_j)_j$ is chosen so that $Cf_j=f_j$.
\end{rem}

\begin{rem}
    What if the convergence of the series \eqref{QAdef}  and/or its limit \emph{did} depend on the choice of basis $(f_j)_j$? Then the basis most relevant to us would be, not an \emph{actual} ONB but a \emph{generalized} ONB, the \emph{position basis} $\{|\vx,s\rangle: \vx\in\RRR^3,s=1,2,3,4\}$. And the analog of $\|P_\pm f_j\|^2$ would be $\langle \vx,s|P_\pm|\vx,s \rangle$.
\end{rem}

\begin{rem}\label{rem:g}
    How would these considerations change if, instead of a subset $A\subseteq \RRR^3$, we considered a function $g:\RRR^3\to[0,1]$ (a ``fuzzy set'') as in \eqref{g}? Then multiplication by $g$ is a bounded self-adjoint operator $M_g:\hilbert\to\hilbert$ that commutes with $C$. Suppose first that $g$ assumes only finitely many values, so $M_g$ has finite spectrum, and there is an ONB $(f_j)_j$ of $\hilbert$ diagonalizing it, 
    \be
    M_g=\sum_j m_j |f_j\rangle \langle f_j|,
    \ee
    with 
    \be
    \sum_{j=1}^\infty m_j = \tr(M_g) = \infty\,,
    \ee
    unless $g$ vanishes almost everywhere.
    Then (dropping again the charge factor $-e$), the expression for the operator $Q_g$ is
    \be
        Q_g = \sum_{j=1}^\infty m_j \, : \Psi^*(f_j)\Psi(f_j):~,
    \ee
    and the convergence of this series would face the same difficulties as that of \eqref{QAdef} because $\sum_j m_j =\infty$. 

    Now suppose that $g$ assumes infinitely many values, including the case that $g$ is continuous and assumes the values 0 and 1. Then $g$ could be approximated by a sequence of functions $g_n:\RRR^3\to[0,1]$ such that $g_n$ assumes only finitely many values, always $g_n\leq g_{n+1}\leq g$, and $g_n\to g$ uniformly as $n\to \infty$ (e.g., let $g_n(\vx)$ be the largest integer multiple of $2^{-n}$ not greater than $g(\vx)$). Since each $g_n$ is nowhere further from 0 than $g$, it seems that the chances of $Q_g$ being well defined are no greater than those for $Q_{g_n}$, but already for the latter the series representation seems to fail to converge.
\end{rem}

\section{Proofs}
\label{sec:proofs}

\subsection{Proofs of Propositions}
\label{sec:pfprop}

\begin{proof}[Proof of Proposition~\ref{prop:finite}]
Since $\Psi(f)$ and $\Psi^*(f)$ are, for every $f\in\hilbert$, well defined and (due to the fermionic symmetry) bounded, so is $\Psi^*(f)\Psi(f)$ and, by \eqref{Wick}, $:\Psi^*(f)\Psi(f):$ . Since in the present case the sum in \eqref{QKdef} has finitely many terms, $Q_\kilbert$ is well defined and bounded. Since $\Psi^*(f)$ is the adjoint of $\Psi(f)$, $\Psi^*(f)\Psi(f)$ is self-adjoint and, by \eqref{Wick}, so is $Q_\kilbert$.

We turn to the basis-independence. Let $(g_j)_{j=1\ldots d}$ be another ONB of $\kilbert$ and $U_{ij}=\scp{f_i}{g_j}$ the unitary $d\times d$ matrix mapping the $f$'s to the $g$'s, $g_j=\sum_i U_{ij} f_i$. Then, by anti-linearity of $\Psi(f)$ in $f$, linearity of $\Psi^*(f)$ in $f$, and unitarity of $U$,
\begin{subequations}
\begin{align}
    \sum_{j=1}^d :\Psi^*(g_j)\Psi(g_j):~
    &\stackrel{\eqref{Wick}}{=}\sum_{j=1}^d \Bigl(\Psi^*(g_j) \Psi(g_j) - \|P_- g_j\|^2\Bigr)\\
    &=\sum_{j=1}^d \Psi^*\Bigl(\sum_{i=1}^d U_{ij}f_i\Bigr)\: \Psi\Bigl(\sum_{\ell=1}^d U_{\ell j}f_\ell\Bigr) -\sum_{j=1}^d \scp{g_j}{P_-g_j}\\
    &=\sum_{ij\ell=1}^d  U_{ij} U^*_{\ell j} \Psi^*(f_i) \Psi(f_\ell) - \tr(P_\kilbert P_-)\\
    &=\sum_{i\ell=1}^d \delta_{i\ell} \Psi^*(f_i) \Psi(f_\ell) - \sum_{i=1}^d \scp{f_i}{P_-f_i}\\
    &=\sum_{i=1}^d \Bigl(\Psi^*(f_i)\Psi(f_i) -\|P_-f_i\|^2\Bigr)\\
    &\stackrel{\eqref{Wick}}{=}\sum_{i=1}^d :\Psi^*(f_i)\Psi(f_i):~.
\end{align}
\end{subequations}

Concerning the spectrum, since $\sum_j \|P_-f_j\|^2=\tr(P_\kilbert P_-)=d^-$, it suffices to show that 
    \begin{equation}
        \sigma\Biggl(  \sum_{j=1}^d \Psi^*(f_j)\Psi(f_j)\Biggr)=\{0,1, \hdots,d\}.
    \end{equation}
Note that $P_j:= \Psi^*(f_j)\Psi(f_j)$ is a projection operator because it is self-adjoint and, by the canonical anti-commutation relations (CAR),
\begin{subequations}
\begin{align}
P_j^2 &= \Psi^*(f_j)\Psi(f_j)\Psi^*(f_j)\Psi(f_j)\\
&=\Psi^*(f_j)\Bigl(-\Psi^*(f_j)\Psi(f_j)+\underbrace{\|f_j\|^2}_{=1} \Bigr)\Psi(f_j)\\
&= 0+\Psi^*(f_j)\Psi(f_j)=P_j
\end{align}
\end{subequations}
because, by the CAR again, $\Psi(f)\Psi(f)=0=\Psi^*(f)\Psi^*(f)$. Moreover, the $P_j$ commute with each other because $\Psi(f_i)$ and $\Psi^*(f_j)$ anti-commute (as $\scp{f_i}{f_j}=0$), so
\begin{subequations}
\begin{align}
    P_iP_j 
    &= \Psi^*(f_i)\Psi(f_i)\Psi^*(f_j)\Psi(f_j)\\
    &= \Psi^*(f_i)\Bigl( -\Psi^*(f_j)\Psi(f_i)\Bigr) \Psi(f_j)\\ &=\Bigl(\Psi^*(f_j)\Psi^*(f_i)\Bigr)\Psi(f_i)\Psi(f_j)\\
    &=\Psi^*(f_j)\Psi^*(f_i)\Bigl(-\Psi(f_j)\Psi(f_i)\Bigr)\\
    &=\Psi^*(f_j)\Bigl(\Psi(f_j)\Psi^*(f_i)\Bigr) \Psi(f_i) = P_jP_i\,.
\end{align}
\end{subequations}

Since a projection has eigenvalues 0 and 1, a sum of $d$ commuting projections can have no other eigenvalues than $0,1,\ldots,d$. In order to see that all of these values occur, we use that $P_j\Psi(f_j)=0$, and then
\begin{equation}
    (P_1+\ldots+P_d)\Psi(f_d)\cdots \Psi(f_1)\Omega=0.
\end{equation}
In general, for the eigenvalue $j$, consider for example the eigenvector
\begin{equation}    \varphi_j=\Psi(f_d)\cdots\Psi(f_{j+1})\Psi^*(f_j)\cdots \Psi^*(f_1)\Omega,
\end{equation}
and use that $P_j\Psi^*(f_j)=\Psi^*(f_j)$.
\end{proof}

\begin{proof}[Proof of Proposition~\ref{prop:finiteperpQcommute}]
    Let $(f_1\ldots,f_{d_1})$ be an ONB of $\kilbert_1$ and $(g_1,\ldots,g_{d_2})$ one of $\kilbert_2$. Then
    \begin{align}
        [Q_{\kilbert_1},Q_{\kilbert_2}]
        &= \sum_{j=1}^{d_1}\sum_{\ell=1}^{d_2} \Bigl[ \Psi^*(f_j)\Psi(f_j)-\|P_-f_j\|^2, \Psi^*(g_\ell)\Psi(g_\ell)-\|P_-g_\ell\|^2\Bigr]\\
        &=\sum_{j=1}^{d_1}\sum_{\ell=1}^{d_2} \Bigl[ \Psi^*(f_j)\Psi(f_j), \Psi^*(g_\ell)\Psi(g_\ell)\Bigr]\\
        &=\sum_{j=1}^{d_1}\sum_{\ell=1}^{d_2} \Bigl( \Psi^*(f_j)\Psi(g_\ell) \scp{f_j}{g_\ell} - \Psi^*(g_\ell)\Psi(f_j) \scp{g_\ell}{f_j} \Bigr)=0
    \end{align}
    since $\kilbert_1\perp\kilbert_2$. Finally, Eq.~\eqref{Qkilbertadd} follows from the fact that $(f_1,\ldots,f_{d_1},g_1,\ldots,g_{d_2})$ is an ONB of $\kilbert_1\oplus \kilbert_2$.
\end{proof}

\begin{proof}[Proof of Proposition~\ref{prop:finiteQcommute}]
    By Proposition~\ref{prop:finiteperpQcommute}, 
    \be
    Q_{\kilbert_1}=Q_\mathscr{A}+Q_\mathscr{B} \text{ and }
    Q_{\kilbert_2}=Q_\mathscr{A}+Q_\mathscr{C}\,. 
    \ee
    Therefore,
    \be
    [Q_{\kilbert_1},Q_{\kilbert_2}] =[Q_\mathscr{A},Q_\mathscr{A}]+[Q_\mathscr{B},Q_\mathscr{A}]+[Q_\mathscr{A},Q_\mathscr{C}]+[Q_\mathscr{B},Q_\mathscr{C}]=0
    \ee
    by Proposition~\ref{prop:finiteperpQcommute} again, since $\mathscr{A},\mathscr{B},\mathscr{C}$ are mutually orthogonal.
\end{proof}

\begin{proof}[Proof of Proposition~\ref{prop:d2}]
    The condition $C\kilbert=\kilbert$ translates into $P_{\kilbert}CP_{\kilbert}=CP_{\kilbert}$. Taking the anti-linear adjoint we obtain that $P_{\kilbert}CP_{\kilbert}$ =  $P_{\kilbert}C$, which implies that $CP_{\kilbert}=P_{\kilbert}C$. Since $CP_-C=P_+$, if we trace in $\kilbert$ we finally obtain
    \begin{equation}
        d^+=\tr_{\kilbert}(P_+)=\tr(CP_-CP_{\kilbert})=\tr(P_-C^2P_{\kilbert})=\tr_{\kilbert}(P_-)=d-d^+\,.
    \end{equation}
    The case for $d^-$ is analogous.
\end{proof}

We now turn to Proposition~\ref{prop:C}. The statement was already given by Wigner in \cite[Eq.~(18)]{Wig60} and Ruotsalainen in \cite[Prop.~2.3]{Ruo12}, but the proofs there have a gap in the infinite-dimensional case,\footnote{They both construct an orthonormal sequence $v_{11},v_{12},\ldots$ of vectors invariant under $A=C$, but it is not clear that (as they imply) the closed span of this sequence equals $\kilbert$ (or, in Wigner's setup, the eigenspace of $\Lambda$ with eigenvalue 1). Here is a counter-example: Let $\kilbert=\ell^2$, and let $A=C$ be complex conjugation (so the standard basis $e_i = (\delta_{ni})_{n\in\NNN} \in \ell^2$ actually has the desired property). When Wigner asks us in (13) for a vector $v\neq 0$, choose $-ie_2$, so he sets $v_{11}=e_2$ in (15). When Wigner asks us for a vector $0\neq v'\perp v_{11}$, choose $v'=e_4$, so he sets $v_{12}=e_4$; and so on. Then the sequence $(v_{11},v_{12},v_{13},\ldots)$ is $(e_2,e_4,e_6,\ldots)$, which is not an ONB of $\kilbert$. The same counter-example also applies to Ruotsalainen's proof.} so we give here a proof for the sake of completeness:

\begin{proof}[Proof of Proposition~\ref{prop:C}]
    It suffices to show that there exists an ONB $(f_j)_j$ with $Cf_j=f_j$ for every $j$; after all, by the conjugate linearity of $C$ we then obtain \eqref{Cfj} for every finite linear combination, and since the anti-unitarity implies that $C$ is bounded and therefore continuous, also for infinite linear combinations.

    We first show that for every $0\neq g\in\kilbert$, one of the following 3 possibilities occurs: (i)~$Cg$ is a multiple of $g$, or (ii)~$Cg\perp g$, or (iii)~for $\alpha$ being any of the two complex square roots of $\scp{g}{Cg}$, $u:=\alpha g+\alpha^*Cg$ and $v:=i\alpha g-i\alpha^* Cg$ are mutually orthogonal nonzero vectors with $Cu=u$ and $Cv=v$.

    Indeed, if neither (i) nor (ii) occurs, then $\alpha\neq 0$ by non-(ii); furthermore, $g$ and $Cg$ are linearly independent by non-(i); thus, $u$ and $v$ are both non-zero, one easily verifies that $Cu=u$ and $Cv=v$, and
    \begin{align}
        \scp{u}{v}&=\scp{\alpha g+\alpha^*Cg}{i\alpha g-i\alpha^* Cg}\\[3mm]
        &=i|\alpha|^2 \|g\|^2+i\alpha^2 \scp{Cg}{g}-i\alpha^{*2}\scp{g}{Cg}-i|\alpha|^2 \underbrace{\|Cg\|^2}_{=\|g\|^2}\\
        &=i|\scp{g}{Cg}|^2-i|\scp{g}{Cg}|^2=0\,.
    \end{align}
    This proves the trichotomy mentioned.
    
    Now let $(g_i)_i$ be any ONB of $\kilbert$. If $Cg_1$ is a multiple of $g_1$, say $Cg_1=z\, g_1$ with $z\in\CCC$, then $|z|=1$ by anti-unitarity, say $z=e^{i\theta}$. Set $f_1:= e^{i\theta/2}g_1$. Then $\|f_1\|=\|g_1\|=1$ and $Cf_1=e^{-i\theta/2}Cg_1=e^{-i\theta/2}e^{i\theta}g_1=f_1$. If, however, $Cg_1\perp g_1$, then set
    \begin{align}
        f_1&:=\frac{1}{\sqrt{2}}(g_1+Cg_1)\\
        f_2&:=\frac{i}{\sqrt{2}}(g_1-Cg_1)\,.
    \end{align}
    One easily verifies that $\|f_1\|=1=\|f_2\|$, $\scp{f_1}{f_2}=0$, $Cf_1=f_1$, and $Cf_2=f_2$. If, however, $Cg_1$ is neither parallel nor orthogonal to $g_1$, then introduce $u$ and $v$ as above and set $f_1=u/\|u\|$ and $f_2=v/\|v\|$. We have thus proved the anchor of the induction with the induction hypothesis for $n\in\NNN$ that \emph{for $V_n:=\mathrm{span}\{g_1,Cg_1,\ldots, g_n,Cg_n\}$, there is an ONB $(f_1,\ldots,f_{\dim V_n})$ with $Cf_j=f_j$ for all $j\in\{1,\ldots,\dim V_n\}$.}

    For the induction step, if $g_{n+1}\in V_n$, then also $Cg_{n+1}\in V_n$ because $V_n$ is closed under $C$, $CV_n\subseteq V_n$, so there is nothing to prove. If $g_{n+1}\notin V_n$, then decompose $g_{n+1}$ into the part in $V_n$ and the part orthogonal to $V_n$; call the latter $\tilde{g}_{n+1}\neq 0$, set $g:=\tilde{g}_{n+1}/\|\tilde{g}_{n+1}\|$, and apply the same reasoning to $g$ as above (for the induction anchor) to $g_1$; this yields an ONB of $\mathrm{span}\{g,Cg\}$ that we call $f_{\dim V_n +1}$ and $f_{\dim V_n+2}$ (or, if $Cg\propto g$, just $f_{\dim V_n+1}$). Since $CV_n\subseteq V_n$ and $g\perp V_n$, also $Cg\perp V_n$, so $\mathrm{span}\{g,Cg\}\perp V_n$ and the new $f$ vectors are $\perp V_n$ as well. One easily verifies that $\mathrm{span}\{g_1,Cg_1,\ldots,g_n,Cg_n,g,Cg\}=\mathrm{span}\{g_1,Cg_1,\ldots,g_n,Cg_n,g_{n+1},Cg_{n+1}\}$. This completes the proof of the induction.

    We thus have that every $V_n$ possesses an ONB $(f_1,\ldots, f_{\dim V_n})$ with $Cf_j=f_j$; it remains to prove the same for $\kilbert$ in the case $\dim\kilbert=\infty$. Here, it is relevant that the induction step added further $f$ vectors but did not change any of the previous ones. We thus obtain a sequence $(f_1,f_2,\ldots)$ of orthonormal vectors such that $\mathrm{span}\{f_1,\ldots,f_{\dim V_n}\}=V_n$. It follows that
    \be
    \mathrm{span}\{f_1,f_2,\ldots\}=\bigcup_{n=1}^\infty V_n=\mathrm{span}\{g_1,Cg_1,g_2,Cg_2,\ldots\} \supseteq \mathrm{span}\{g_1,g_2,\ldots\}\,,
    \ee
    which is a dense subspace of $\kilbert$. Therefore, $(f_j)_j$ is an ONB of $\kilbert$.
\end{proof}

\subsection{Proof of Theorem~\ref{TheoremVacuum}}
\label{sec:thm1}
 
Theorem \ref{TheoremVacuum} asserts that even if the series \eqref{QAdef} for $Q_A$ converged in some sense, states obtained through polynomials of the fields acting on the vacuum cannot belong to the domain of $Q_A$ for cube-shaped regions $A$. 
Before proving it, we prove a lemma in which we compute the norm of the charge operator acting on the vacuum, in terms of an orthonormal basis. Afterward, we will show that this norm is infinite. 

In the following, let $e_k$ for $k\in\{1,2,3,4\}$ denote the standard basis vectors of spin space $\CCC^4$. We also recall from \eqref{Qtruncated} the notation
\be
Q_{A,(f_j)}^J = \sum_{j=1}^J :\Psi^*(f_j) \Psi(f_j):
\ee
for the truncated charge operator.  
One would expect $\|Q_A\Omega\|^2$, if it is finite, to be equal to $\displaystyle \lim_{J\to\infty}\|Q_{A,(f_j)}^J \Omega\|^2$.

\begin{lemma}\label{LemmaVacuum}
Let $A$ be any Borel measurable subset of $\RRR^3$, $(\varphi_i)_{i \in \mathbb{N}}$ an ONB of $L^2(A,\CCC)$,  $(\sigma_i)_{i \in \mathbb{N}}$ an ONB of $L^2(A^c,\CCC)$, $f_i^k:=\varphi_i\otimes e_k$. Then
\begin{equation}\label{vacuumOrthonormalBasis}
 \liminf_{J\to\infty} \bigl\Vert Q_{A,(f_i^k)}^J \Omega \bigr\Vert^2 \geq  \sum_{i,j=1}^{\infty} \left[ m^2   \left\vert \left\langle \hat{\sigma}_i, \frac{1}{\lambda(\vp)}\hat{\varphi}_j \right\rangle \right\vert^2
    +  \sum_{s=1}^3 \left\vert \left\langle \hat{\sigma}_i, \frac{p_s}{\lambda(\vp)}\hat{\varphi}_j \right\rangle \right\vert^2 \right] ,
\end{equation}
where
\be\label{lambdadef}
\lambda(\vp)=\sqrt{\Vert \vp \Vert^2+m^2}
\ee
is the energy function and $\hat{f}$ denotes the Fourier transform of the function $f$.
\end{lemma}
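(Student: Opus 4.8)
The plan is to compute $\bigl\|Q_{A,(f_i^k)}^J\Omega\bigr\|^2$ in closed form, pass to $J\to\infty$, and identify the limit with the right-hand side of \eqref{vacuumOrthonormalBasis}. The starting observation is that the charge operator is very simple on the vacuum: using \eqref{Wick}, $\Psi(f)\Omega=c^*(f)\Omega$, $c(f)\Omega=0$, and $c(f)c^*(f)\Omega=\|P_-f\|^2\Omega$ (from \eqref{c*ccommute}), one gets $:\Psi^*(f)\Psi(f):\,\Omega=b^*(f)c^*(f)\Omega$, and by the definitions of $b^*$ and $c^*$ in Section~\ref{sec:setup} this is the $(1,1)$-sector vector $(P_+f)\otimes(CP_-f)\in\hilbert_+\otimes(C\hilbert_-)$. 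Hence, if $f_1,f_2,\ldots$ is any enumeration of the ONB $(f_i^k)_{i,k}$ of $L^2(A,\CCC^4)$ and $R_J:=\sum_{j=1}^J|f_j\rangle\langle f_j|$ is the rank-$J$ projection onto their span, then $Q_{A,(f_j)}^J\Omega=\sum_{j=1}^J(P_+f_j)\otimes(CP_-f_j)$, and expanding the square with $\scp{a\otimes b}{a'\otimes b'}=\scp{a}{a'}\scp{b}{b'}$, the anti-unitarity of $C$ (so $\scp{CP_-f_j}{CP_-f_{j'}}=\scp{P_-f_{j'}}{P_-f_j}$), $P_\pm=P_\pm^*=P_\pm^2$, and $R_J=\sum_{j=1}^J|f_j\rangle\langle f_j|$ gives, for every finite $J$,
\be
\bigl\|Q_{A,(f_j)}^J\Omega\bigr\|^2=\sum_{j,j'=1}^J\scp{f_j}{P_+f_{j'}}\,\scp{f_{j'}}{P_-f_j}=\tr\!\bigl(R_JP_+R_JP_-\bigr)=\bigl\|P_+R_JP_-\bigr\|_{\mathrm{HS}}^2\,;
\ee
in particular the value is independent of the enumeration of the doubly-indexed basis.

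Next I would let $J\to\infty$. Since $(f_j)_j$ is an ONB of $L^2(A,\CCC^4)$ we have $R_J\to P_A$ strongly, where $P_A$ denotes the orthogonal projection of $\hilbert$ onto $L^2(A,\CCC^4)$; as $P_\pm$ are bounded, $P_+R_JP_-\to P_+P_AP_-$ strongly. Because the Hilbert–Schmidt norm is lower semicontinuous under strong convergence (Fatou applied to $\sum_n\|\,\cdot\,e_n\|^2$ in a fixed ONB $(e_n)_n$), it follows that
\be
\liminf_{J\to\infty}\bigl\|Q_{A,(f_j)}^J\Omega\bigr\|^2\ \geq\ \bigl\|P_+P_AP_-\bigr\|_{\mathrm{HS}}^2\,,
\ee
so it remains to show that $\|P_+P_AP_-\|_{\mathrm{HS}}^2$ equals the right-hand side of \eqref{vacuumOrthonormalBasis} (both being elements of $[0,\infty]$).

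To evaluate it I would work in momentum space. There $P_A$ is the integral operator $\hat P_A\otimes I_{\CCC^4}$, where $\hat P_A=\mathcal{F}M_{\mathbf{1}_A}\mathcal{F}^{-1}$ on $L^2(\RRR^3,\CCC)$ has Hermitian kernel $K_A(\vp,\vk)$, while $P_\pm$ act fiberwise as multiplication by the $4\times4$ matrices $P_\pm(\vp)=\tfrac12\bigl(I\pm\widehat{h_D}(\vp)/\lambda(\vp)\bigr)$ with $\widehat{h_D}(\vp)=\sum_a\gamma^0\gamma^ap_a+m\gamma^0$. Then $P_+P_AP_-$ has matrix kernel $K_A(\vp,\vk)\,P_+(\vp)P_-(\vk)$, so that $\|P_+P_AP_-\|_{\mathrm{HS}}^2=\int\!\int\mathrm{d}\vp\,\mathrm{d}\vk\,|K_A(\vp,\vk)|^2\,\tr_{\CCC^4}\!\bigl(P_+(\vp)P_-(\vk)\bigr)$, and the standard gamma-matrix traces ($\tr\gamma^0=0$, $\tr(\gamma^0\gamma^a)=0$, $\tr(\gamma^a\gamma^b)=-4\delta^{ab}$) give
\be
\tr_{\CCC^4}\!\bigl(P_+(\vp)P_-(\vk)\bigr)=1-\frac{m^2+\vp\cdot\vk}{\lambda(\vp)\lambda(\vk)}=\frac12\sum_{\alpha=0}^3\bigl(u_\alpha(\vp)-u_\alpha(\vk)\bigr)^2\,,
\ee
where $u_0:=m/\lambda$ and $u_s:=p_s/\lambda$ ($s=1,2,3$), using $\sum_{\alpha}u_\alpha(\vp)^2=1$. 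Since $(\vp,\vk)\mapsto(u_\alpha(\vp)-u_\alpha(\vk))K_A(\vp,\vk)$ is exactly the kernel of the commutator $[M_{u_\alpha},\hat P_A]$ on $L^2(\RRR^3,\CCC)$ — and each $M_{u_\alpha}$ is bounded, as $|u_\alpha|\leq1$ — this yields $\|P_+P_AP_-\|_{\mathrm{HS}}^2=\tfrac12\sum_{\alpha=0}^3\bigl\|[M_{u_\alpha},\hat P_A]\bigr\|_{\mathrm{HS}}^2$. Finally, for each $\alpha$, the operator $[M_{u_\alpha},\hat P_A]$ is anti-self-adjoint and $\hat P_{A^c}=I-\hat P_A$, so by cyclicity of the trace $\tfrac12\|[M_{u_\alpha},\hat P_A]\|_{\mathrm{HS}}^2=\tr\!\bigl(\hat P_AM_{u_\alpha}\hat P_{A^c}M_{u_\alpha}\bigr)$; expanding this trace over the ONB $(\hat\varphi_j)_j$ of $\mathrm{ran}\,\hat P_A$ and $(\hat\sigma_i)_i$ of $\mathrm{ran}\,\hat P_{A^c}$ turns it into $\sum_{i,j}\bigl|\scp{\hat\sigma_i}{M_{u_\alpha}\hat\varphi_j}\bigr|^2$, and summing over $\alpha$ and substituting $u_0=m/\lambda$, $u_s=p_s/\lambda$ reproduces precisely \eqref{vacuumOrthonormalBasis}.

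The delicate point — and the reason the lemma is phrased with a $\liminf$ and an inequality rather than an equality — is that $\int\!\int|K_A|^2$ and $\int\!\int|K_A|^2\,u(\vp)\cdot u(\vk)$ are each individually infinite (indeed $K_A(\vp,\vp)$ is a positive constant and $\int\mathrm{d}\vp\,m^2/\lambda(\vp)^2=\infty$ in $3$d), so one must not manipulate $\tr(P_AP_+P_AP_-)$ by splitting off $\tr(P_A)$. The fix is to keep everything as (possibly infinite) Hilbert–Schmidt norms throughout; the pointwise inequality $\tr_{\CCC^4}(P_+(\vp)P_-(\vk))=\tfrac12\sum_\alpha(u_\alpha(\vp)-u_\alpha(\vk))^2\geq0$ is exactly what legitimizes the Fatou step, the use of Tonelli in the kernel identities, and the commutator rewriting, regardless of whether the final quantity is finite.
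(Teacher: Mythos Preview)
Your argument is correct and reaches the same inequality, but by a genuinely different route than the paper. The paper works entirely at finite $J$ with the explicit product basis $f_i^k=\varphi_i\otimes e_k$: it expands $\langle\hat f_i^k,\Lambda^{\pm}\hat f_j^l\rangle$, sums over the spinor indices $k,l$ using the $\gamma$-matrix algebra to reduce to scalar matrix elements $\langle\hat\varphi_i,(1/\lambda)\hat\varphi_j\rangle$ and $\langle\hat\varphi_i,(p_s/\lambda)\hat\varphi_j\rangle$, then completes $(\varphi_i)$ to a full ONB of $L^2(\RRR^3)$ via $(\sigma_i)$ and applies Parseval, after which the identity $m^2/\lambda^2+\|\vp\|^2/\lambda^2=1$ kills the would-be divergent diagonal and leaves an exact finite-$J$ formula containing the desired $\sigma$-sum plus a nonnegative remainder (the $i>J$ tail), which is then dropped. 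You instead recognise $\|Q^J_A\Omega\|^2=\|P_+R_JP_-\|_{\mathrm{HS}}^2$, pass to the limit by lower semicontinuity of the HS norm, and identify $\|P_+P_AP_-\|_{\mathrm{HS}}^2$ with the right-hand side via the commutator rewriting $\tfrac12\sum_\alpha\|[M_{u_\alpha},\hat P_A]\|_{\mathrm{HS}}^2$.

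What each approach buys: the paper's computation is more elementary and yields an \emph{exact} expression for $\|Q^J_A\Omega\|^2$ (with an explicit nonnegative remainder), whereas your Fatou step only gives the one-sided bound, which is all the lemma claims. Your version, on the other hand, shows transparently that the right-hand side of \eqref{vacuumOrthonormalBasis} is nothing but $\|P_+P_AP_-\|_{\mathrm{HS}}^2$ --- a conceptually pleasant identification (this is exactly the Shale--Stinespring obstruction) --- and it makes the inequality hold for \emph{any} enumeration of the $(f_i^k)$, not just the product one. One small caution: your kernel computation $\|P_+P_AP_-\|_{\mathrm{HS}}^2=\int\!\int|K_A|^2\tr_{\CCC^4}(P_+(\vp)P_-(\vk))$ tacitly assumes $K_A$ is a function, i.e.\ $\vol(A)<\infty$; for general Borel $A$ you can either use $P_+P_AP_-=-P_+P_{A^c}P_-$ when $\vol(A^c)<\infty$, or bypass kernels altogether by writing $\Lambda=\sum_\alpha M_{u_\alpha}\otimes B_\alpha$ (with $B_0=\beta$, $B_s=\alpha_s$) and using $\tr_{\CCC^4}(B_\alpha B_\beta)=4\delta_{\alpha\beta}$ to get $\|P_+P_AP_-\|_{\mathrm{HS}}^2=\tfrac18\|[\Lambda,\hat P_A\otimes I]\|_{\mathrm{HS}}^2=\tfrac12\sum_\alpha\|[M_{u_\alpha},\hat P_A]\|_{\mathrm{HS}}^2$ directly at the operator level.
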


\begin{proof}
For any ONB $(f_j)$ of $L^2(A,\CCC^4)$, the only non-vanishing sector of the vector $Q_{A,(f_j)}^J \Omega$ is the $(1,1)$ sector given by
\begin{subequations}
\begin{align}
\bigl(Q_{A,(f_j)}^J \Omega \bigr)^{(1,1)}&=\sum_{i=1}^{J}b^*(f_i)c^*(f_i)\Omega \\
&=\sum_{i=1}^{J}P_+f_i \otimes CP_- f_i,
\end{align}
\end{subequations}
with norm
\begin{equation}\label{eq: NormChargeOperatorVacumm}
    \bigl\Vert Q_{A,(f_j)}^J \Omega \bigr\Vert^2
=\sum_{i,j=1}^{J} \langle P_+ f_i, f_j \rangle \langle f_j, P_-f_i \rangle.
\end{equation}
The projections $P_+$ and $P_-$ in the Fourier space have the form of a multiplication operator
\begin{equation}
\Lambda^{\pm}=\mathcal{F}^{\oplus 4}P_{\pm}(\mathcal{F}^{\oplus 4})^{-1}=\frac{1}{2}\pm\frac{m\beta + \va\cdot \vp}{2\lambda(\vp)},
\end{equation}
where $ \alpha_j=\gamma^0 \gamma^j$, $\beta=\gamma^0$ and $\mathcal{F^{\oplus 4}}$ denotes the direct sum of four Fourier transform operators (see \cite[p.~9,10]{Thaller}). Consider now the Fourier transformed bases $(\hat{\varphi}_i)_i$ and $(\hat{f}_i^k)_{i\in\NNN,k\in\{1\ldots 4\}}$; then we can rewrite \eqref{eq: NormChargeOperatorVacumm} as
\begin{subequations}\label{eq: NormChargeOperatorVacummMomentumSpace}
\begin{align}
\bigl\Vert Q_{A,(f_i^k)}^J \Omega \bigr\Vert^2&=\sum_{i,j=1}^{J}\sum_{k,l=1}^4 \langle \hat{f}_i^k, \Lambda^+ \hat{f}_j^l \rangle \langle \hat{f}_j^l, \Lambda^- \hat{f}_i^k \rangle\\
&=\frac{1}{4}\sum_{i,j=1}^{J}\sum_{k,l=1}^4 \left[ \delta_{ij}\delta_{kl} -\left\vert  \left\langle \hat{f}_i^k, \frac{m\beta+\va\cdot \vp }{\lambda(\vp)}\hat{f}_j^l \right\rangle  \right\vert^2 \right].
\end{align}
\end{subequations}
By direct computation, one can obtain the following relations:
\begin{subequations}
\begin{align}
    \left\vert \left\langle \hat{f}_i^k, \frac{1}{\lambda(\vp)}\beta \hat{f}_j^l \right\rangle \right\vert &= \delta_{kl} \left\vert \left\langle \hat{\varphi}_i, \frac{1}{\lambda(\vp)}\hat{\varphi}_j \right\rangle \right\vert,\\
    \sum_{k,l=1}^4  \left\vert \left\langle \hat{f}_i^k, \frac{\va\cdot \vp}{\lambda(\vp)} \hat{f}_j^l \right\rangle \right\vert^2 &=4 \sum_{s=1}^3 \left\vert \left\langle \hat{\varphi}_i, \frac{p_s}{\lambda(\vp)}\hat{\varphi}_j \right\rangle \right\vert^2,\\
    \mathrm{Re}\left[ \overline{ \left\langle \hat{f}_i^k, \frac{\beta}{\lambda(\vp)}\hat{f}_j^l \right\rangle    }  \left\langle \hat{f}_i^k, \frac{\va\cdot \vp}{\lambda(\vp)}\hat{f}_j^l \right\rangle      \right]&=0,
    \end{align}
\end{subequations}
which, when applied in \eqref{eq: NormChargeOperatorVacummMomentumSpace}  and after summing over $k,l$, lead to the following expression:
\begin{equation}
\Vert Q_{A,(f_i^k)}^J \Omega \Vert^2=\sum_{i,j=1}^{J} \left[ \delta_{ij}-m^2\left\vert \left\langle \hat{\varphi}_i, \frac{1}{\lambda(\vp)}\hat{\varphi}_j \right\rangle \right\vert^2-  \sum_{s=1}^3 \left\vert \left\langle \hat{\varphi}_i, \frac{p_s}{\lambda(\vp)}\hat{\varphi}_j \right\rangle \right\vert^2    \right].
\end{equation}

Now we use the ONB $(\sigma_i)$ of $L^2(A^c)$ and the fact that $1/\lambda(\vp)$ and $p_s/\lambda(\vp)$ are bounded functions. Since the union $(\varphi_i,\sigma_i)_{i \in \mathbb{N}}$ is an ONB of $L^2(\RRR^3)$,  
Parseval's identity states that for every $j \in \mathbb{N}$,
\begin{equation}
\sum_{i=1}^{\infty}\left[ \left\vert \left\langle \hat{\varphi}_i,\frac{1}{\lambda(\vp)}\hat{\varphi}_j \right\rangle \right\vert^2+ \left\vert \left\langle \hat{\sigma}_i,\frac{1}{\lambda(\vp)}\hat{\varphi}_j \right\rangle \right\vert^2 \right]=\left\Vert \frac{1}{\lambda(\vp)} \hat{\varphi}_j \right\Vert^2
\end{equation}
and
\begin{equation}
\sum_{i=1}^{\infty}\left[ \left\vert \left\langle \hat{\varphi}_i,\frac{p_s}{\lambda(\vp)}\hat{\varphi}_j \right\rangle \right\vert^2+ \left\vert \left\langle \hat{\sigma}_i,\frac{p_s}{\lambda(\vp)}\hat{\varphi}_j \right\rangle \right\vert^2 \right]=\left\Vert \frac{p_s}{\lambda(\vp)} \hat{\varphi}_j \right\Vert^2
\end{equation}
for every $s \in \{1,2,3\}$. As a consequence,
\begin{subequations}
\begin{align}
    \bigl\Vert Q_{A,(f_i^k)}^J \Omega \bigr\Vert^2
    &=\sum_{j=1}^J\left[ \Vert \hat{\varphi}_j \Vert^2-m^2 \left\Vert \frac{1}{\lambda(\vp)}\hat{\varphi}_j \right\Vert^2- \sum_{s=1}^3\left\Vert \frac{p_s}{\lambda(\vp)} \hat{\varphi}_j \right\Vert^2 \right.\nonumber\\ 
    &\hspace{20mm} +\sum_{i=J+1}^\infty \left(m^2  \left\vert \left\langle  \hat{\varphi}_i,\frac{1}{\lambda(\vp)}\hat{\varphi}_j \right\rangle \right\vert^2+\sum_{s=1}^3\left\vert \left\langle  \hat{\varphi}_i,\frac{p_s}{\lambda(\vp)}\hat{\varphi}_j \right\rangle \right\vert^2 \right)  \nonumber\\ 
    &\hspace{20mm} + \left. \sum_{i=1}^{\infty}\left(m^2  \left\vert \left\langle  \hat{\sigma}_i,\frac{1}{\lambda(\vp)}\hat{\varphi}_j \right\rangle \right\vert^2+\sum_{s=1}^3\left\vert \left\langle  \hat{\sigma}_i,\frac{p_s}{\lambda(\vp)}\hat{\varphi}_j \right\rangle \right\vert^2 \right)\right]\label{eq:SumLemmaChargeoperator}\\
    &\geq \sum_{j=1}^J\Biggl[ \biggl\langle \hat{\varphi}_j, \underbrace{\biggl(1-\frac{m^2}{\lambda(\vp)^2}-  \frac{\|\vp\|^2}{\lambda(\vp)^2} \biggr)}_{=0 \text{ by \eqref{lambdadef}}}\hat{\varphi}_j \biggr\rangle \nonumber\\ 
    &\hspace{20mm} + \left. \sum_{i=1}^{\infty}\left(m^2  \left\vert \left\langle  \hat{\sigma}_i,\frac{1}{\lambda(\vp)}\hat{\varphi}_j \right\rangle \right\vert^2+\sum_{s=1}^3\left\vert \left\langle  \hat{\sigma}_i,\frac{p_s}{\lambda(\vp)}\hat{\varphi}_j \right\rangle \right\vert^2 \right)\right],
\end{align}
\end{subequations}
which implies \eqref{vacuumOrthonormalBasis}.
\end{proof}

In order to compute the first term in  expression \eqref{vacuumOrthonormalBasis},  we will need the next result to express the function $\lambda(\vp)^{-1}$ in the position space.

\begin{lemma}\label{FunctionLambda}
    Let again $\lambda(\vp)=\sqrt{\Vert \vp \Vert^2 +m^2}$. Then
    \begin{equation}
        \mathcal{F}^{-1}\left( \frac{1}{\lambda} \right)(\vx)=m\sqrt{\frac{2}{\pi}}\frac{K_1(m \Vert \vx \Vert)}{\Vert \vx \Vert}, \quad \vx \in \mathbb{R}^3,
    \end{equation}
distributionally, where $K_1$ is a modified Bessel function of the second kind.
\end{lemma}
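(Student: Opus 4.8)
The plan is to compute $\mathcal{F}^{-1}(1/\lambda)$ by writing $1/\lambda$ as a superposition of Gaussians (the Schwinger subordination trick), inserting the explicit Gaussian Fourier transform, and recognizing the remaining one-parameter integral as an integral representation of $K_1$. Since $1/\lambda$ is bounded (by $1/m$) but neither integrable nor square-integrable over $\RRR^3$, the identity must be read as one of tempered distributions, i.e.\ tested against Schwartz functions $\varphi\in\mathcal{S}(\RRR^3)$; the interchanges of integrals that occur are legitimate thanks to two uniform bounds recorded below.

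\textbf{Main computation.} From $\Gamma(\tfrac12)=\sqrt{\pi}$ one gets, for every $a>0$, $a^{-1/2}=\pi^{-1/2}\int_0^\infty t^{-1/2}e^{-at}\,dt$; with $a=\|\vp\|^2+m^2$ this gives $1/\lambda(\vp)=\pi^{-1/2}\int_0^\infty t^{-1/2}e^{-tm^2}e^{-t\|\vp\|^2}\,dt$. In Thaller's unitary convention, the inverse Fourier transform of $\vp\mapsto e^{-t\|\vp\|^2}$ is $\vx\mapsto(2t)^{-3/2}e^{-\|\vx\|^2/(4t)}$ (a product of one-dimensional Gaussian integrals). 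Interchanging $\int_0^\infty dt$ with $\mathcal{F}^{-1}$ (justified below) gives
\[
\mathcal{F}^{-1}(1/\lambda)(\vx)=\frac{1}{\sqrt{\pi}\,2^{3/2}}\int_0^\infty t^{-2}\,e^{-m^2 t-\|\vx\|^2/(4t)}\,dt .
\]
Now the standard formula $\int_0^\infty t^{\nu-1}e^{-bt-a/t}\,dt=2\,(a/b)^{\nu/2}K_\nu(2\sqrt{ab})$ (valid for $a,b>0$), used with $\nu=-1$, $b=m^2$, $a=\|\vx\|^2/4$, together with $K_{-1}=K_1$, turns the last integral into $\tfrac{4m}{\|\vx\|}K_1(m\|\vx\|)$; since $\tfrac{4}{2^{3/2}\sqrt{\pi}}=\sqrt{2/\pi}$, this is exactly the asserted formula for $\vx\neq 0$. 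Finally, the asymptotics $K_1(z)\sim z^{-1}$ as $z\to0^+$ and $K_1(z)\sim\sqrt{\pi/(2z)}\,e^{-z}$ as $z\to\infty$ show the right-hand side lies in $L^1_{\mathrm{loc}}(\RRR^3)$ and decays rapidly, so it is a tempered distribution and the pointwise identity is also the distributional one.

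\textbf{Justifying the interchanges -- the only delicate point.} Using $\int(\mathcal{F}^{-1}f)g=\int f\,(\mathcal{F}^{-1}g)$ we have, for $\varphi\in\mathcal{S}(\RRR^3)$, $\langle\mathcal{F}^{-1}(1/\lambda),\varphi\rangle=\int_{\RRR^3}(1/\lambda)(\vp)\,(\mathcal{F}^{-1}\varphi)(\vp)\,d\vp$, an absolutely convergent integral since $1/\lambda$ is bounded and $\mathcal{F}^{-1}\varphi\in\mathcal{S}\subset L^1$. Substituting the subordination formula and applying Fubini is legitimate because the inner integral $\int_0^\infty t^{-1/2}e^{-t(\|\vp\|^2+m^2)}\,dt=\sqrt{\pi}/\lambda(\vp)$ is bounded by $\sqrt{\pi}/m$; passing then from $\vp$-space to $\vx$-space (via $\int e^{-t\|\vp\|^2}(\mathcal{F}^{-1}\varphi)(\vp)\,d\vp=\int(2t)^{-3/2}e^{-\|\vx\|^2/(4t)}\varphi(\vx)\,d\vx$) and applying Fubini once more is legitimate because $\int_{\RRR^3}(2t)^{-3/2}e^{-\|\vx\|^2/(4t)}\,d\vx=(2\pi)^{3/2}$ is bounded uniformly in $t$. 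The outcome is precisely $\int_{\RRR^3}g(\vx)\varphi(\vx)\,d\vx$ with $g$ the function found above, which proves the claim. (For $m=0$ the boundedness of $1/\lambda$ is replaced by $1/\|\vp\|\in L^1_{\mathrm{loc}}(\RRR^3)$; alternatively the case $m>0$ passes to $m=0$ in the limit, since $m\,K_1(m\|\vx\|)/\|\vx\|\to\|\vx\|^{-2}$.) In short, the only substantive ingredients are the subordination representation and the Bessel integral; the rest is bookkeeping, and the sole mild obstacle is keeping the distributional character straight, which the two displayed uniform bounds take care of.
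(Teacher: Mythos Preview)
Your argument is correct and complete; the computation checks out, and the two Fubini bounds you isolate (the uniform bound $\sqrt{\pi}/m$ on the $t$-integral and the bound $(2\pi)^{3/2}$ on the $\vx$-Gaussian integral, combined with $\varphi\in L^\infty$ and $\int_0^\infty t^{-1/2}e^{-m^2 t}\,dt<\infty$) do justify the interchanges for $m>0$.

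Your route is genuinely different from the paper's. The paper integrates $e^{i\vp\cdot\vx}/\lambda(\vp)$ directly in spherical coordinates, reduces to a one-dimensional integral $\int_0^R \frac{\|\vp\|\sin(\|\vp\|\|\vx\|)}{\sqrt{\|\vp\|^2+m^2}}\,d\|\vp\|$, rewrites this as $-\tfrac{d}{d\|\vx\|}$ of a cosine integral recognized as $K_0(m\|\vx\|)$, and then uses $K_0'=-K_1$; the distributional aspect is handled by appealing to uniform convergence on compact sets. By contrast, you bypass the oscillatory radial integral entirely via the Schwinger subordination $a^{-1/2}=\pi^{-1/2}\int_0^\infty t^{-1/2}e^{-at}\,dt$, trade the problem for Gaussian Fourier transforms, and land directly on the integral representation $\int_0^\infty t^{\nu-1}e^{-bt-a/t}\,dt=2(a/b)^{\nu/2}K_\nu(2\sqrt{ab})$ with $\nu=-1$. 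What this buys you: the subordination approach generalizes effortlessly to $\lambda^{-s}$ for other $s>0$ and to other dimensions, and your treatment of the distributional pairing is cleaner than the paper's somewhat informal ``$f_R\to K_0$ on compact sets.'' What the paper's approach buys: it uses only the most classical facts about $K_0$ (its cosine-integral definition and $K_0'=-K_1$), so a reader who has not seen the general $K_\nu$ integral can still follow it.
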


\begin{proof}
Following the ideas of \cite[p.~65]{Electro}, we compute the distributional limit
\begin{subequations}
\begin{align}
(2\pi)^{3/2}\mathcal{F}^{-1}\left( \frac{1}{\lambda} \right)(\vx)
&=\lim_{R \to \infty}\int_0^R \mathrm{d} \Vert \vp \Vert\int_0^{\pi}  \mathrm{d} \theta \int_0^{2\pi}  \mathrm{d}\varphi \frac{e^{i \Vert \vx \Vert \Vert \vp \Vert \cos \theta}}{\sqrt{\Vert \vp \Vert^2+m^2}} \Vert \vp \Vert^2 \sin \theta \\
&=2\pi\lim_{R \to \infty}\int_0^R \mathrm{d} \Vert \vp \Vert \frac{\Vert \vp \Vert^2  \, }{\sqrt{\Vert \vp \Vert^2+m^2}} 
 \int_0^{\pi}e^{i \Vert \vx \Vert \Vert \vp \Vert \cos \theta}  \sin \theta \, \mathrm{d} \theta\\
&= 4\pi \lim_{R \to \infty} \frac{1}{\Vert \vx \Vert} \int_0^R \frac{\sin(\Vert \vp \Vert \Vert \vx \Vert) \Vert \vp \Vert}{\sqrt{\Vert \vp \Vert^2+m^2}} \mathrm{d}\Vert \vp \Vert\\
&=-4\pi  \lim_{R \to \infty} \frac{1}{\Vert \vx \Vert} \frac{\mathrm{d}}{\mathrm{d} \Vert \vx \Vert} \int_0^R \frac{\cos(m\Vert \vp \Vert \Vert \vx \Vert)}{\sqrt{\Vert \vp \Vert^2+1}} \mathrm{d}\Vert \vp \Vert.
\end{align}
\end{subequations}
Now, let
\begin{equation}
f_R(m\Vert \vx \Vert)=\int_0^R \frac{\cos(m\Vert \vp \Vert \Vert \vx \Vert)}{\sqrt{\Vert \vp \Vert^2+1}} \mathrm{d}\Vert \vp \Vert,
\end{equation}
and 
\begin{equation}
K_0(m\Vert \vx \Vert)=\int_0^{\infty} \frac{\cos(m\Vert \vp \Vert \Vert \vx \Vert)}{\sqrt{\Vert \vp \Vert^2+1}} \mathrm{d}\Vert \vp \Vert.
\end{equation}
Due to the fact that $f_R \to K_0$ on compact sets,  they converge distributionally, and using that $K_0'=-K_1$ (see, e.g., \cite[p.~376]{Bessel1}), we conclude that
\begin{equation}
(2\pi)^{3/2}\mathcal{F}^{-1}\left( \frac{1}{\lambda} \right)(\vx)=-\frac{4\pi}{\Vert \vx  \Vert}\frac{\mathrm{d}}{\mathrm{d} \Vert \vx \Vert}K_0(m \Vert \vx \Vert)=4 \pi m \frac{K_1(m \Vert \vx \Vert)}{\Vert \vx \Vert}.
\end{equation}
\end{proof}

The last preliminary result that we will need is  to prove Theorem \ref{TheoremVacuum} for the particular case where $\psi=|\Omega\rangle$. This is the purpose of the next lemma.

\begin{lemma}\label{lem3}
    Let $\vx\in\RRR^3$, and let $A=\vx+[-\pi,\pi]^3\subset \RRR^3$. Then there is an ONB $(f_j)_j$ of $L^2(A,\CCC^4)$ such that
\be
\lim_{J\to\infty} \Bigl\| Q_{A,(f_j)}^J ~ |\Omega\rangle \Bigr\|^2 = \infty\,.
\ee
\end{lemma}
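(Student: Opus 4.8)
The plan is to feed a suitable orthonormal basis into Lemma~\ref{LemmaVacuum} and then show that the lower bound it produces is already $+\infty$. First I would fix any ONB $(\varphi_i)_{i\in\NNN}$ of $L^2(A,\CCC)$ and any ONB $(\sigma_i)_{i\in\NNN}$ of $L^2(A^c,\CCC)$ (both spaces are infinite-dimensional since $\vol(A)>0$ and $\vol(A^c)=\infty$), and take $f_i^k:=\varphi_i\otimes e_k$ as the ONB of $L^2(A,\CCC^4)$ claimed in the lemma. By Lemma~\ref{LemmaVacuum}, $\liminf_{J\to\infty}\|Q_{A,(f_i^k)}^J\Omega\|^2$ is bounded below by the double series on the right of \eqref{vacuumOrthonormalBasis}; if that series diverges, then this $\liminf$ equals $+\infty$, hence so does the limit, which is exactly the assertion. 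Since every term of the series is non-negative, it suffices to make one family of terms diverge: the mass terms $m^2\,|\langle\hat\sigma_i,\lambda^{-1}\hat\varphi_j\rangle|^2$ when $m>0$, or the $s=1$ terms $|\langle\hat\sigma_i,(p_1/\lambda)\hat\varphi_j\rangle|^2$ when $m=0$ (where $\lambda(\vp)=\|\vp\|$).

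The next step converts the chosen family of terms into a kernel integral by two applications of Parseval. Take $m>0$ and put $a(\vp)=m/\lambda(\vp)$, a bounded multiplier whose inverse Fourier transform $g:=\mathcal{F}^{-1}(a)$ is, by Lemma~\ref{FunctionLambda}, the explicit function $g(\vx)=m^2\sqrt{2/\pi}\,K_1(m\|\vx\|)/\|\vx\|$; it lies in $L^1(\RRR^3)$ because $K_1$ is integrable near $0$ (where $K_1(t)\sim t^{-1}$, so $g(\vx)\sim\sqrt{2/\pi}\,\|\vx\|^{-2}$) and decays exponentially at infinity. Then $\langle\hat\sigma_i,a\hat\varphi_j\rangle$ equals, up to a fixed normalization constant, $\langle\sigma_i,g*\varphi_j\rangle$ by Plancherel, with $(g*\varphi_j)(\vy)=\int_A g(\vy-\vy')\varphi_j(\vy')\,\mathrm{d}\vy'$. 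Summing $|\langle\sigma_i,g*\varphi_j\rangle|^2$ over $i$ with Parseval for the ONB $(\sigma_i)$ of $L^2(A^c)$, then over $j$ with Parseval for the ONB $(\varphi_j)$ of $L^2(A)$ — legitimate since $g(\vy-\cdot)\in L^2(A)$ for a.e.\ $\vy\in A^c$, where $\mathrm{dist}(\vy,A)>0$ — and exchanging sum and integral by Tonelli, I get
\begin{equation*}
\sum_{i,j=1}^{\infty} m^2\,\bigl|\langle\hat\sigma_i,\lambda^{-1}\hat\varphi_j\rangle\bigr|^2 \;=\; c_0\int_{A^c}\!\int_A |g(\vy-\vy')|^2\,\mathrm{d}\vy'\,\mathrm{d}\vy
\end{equation*}
for a positive constant $c_0$. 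For $m=0$ the mass terms vanish, and I would run the identical argument with the bounded multiplier $a(\vp)=p_1/\|\vp\|$ and $g(\vx)=\mathcal{F}^{-1}(p_1/\|\vp\|)(\vx)=-i\partial_{x_1}\mathcal{F}^{-1}(1/\|\vp\|)(\vx)$, a constant times $x_1/\|\vx\|^4$, using that $\mathcal{F}^{-1}(1/\|\vp\|)(\vx)=\sqrt{2/\pi}\,\|\vx\|^{-2}$ (the $m\downarrow0$ limit of the formula in Lemma~\ref{FunctionLambda}); the off-diagonal convolution representation of $g*\varphi_j$ there follows from differentiating $\int_A \varphi_j(\vy')\|\vy-\vy'\|^{-2}\,\mathrm{d}\vy'$ under the integral sign on the open set $\RRR^3\setminus\overline{A}$.

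It remains to prove $\int_{A^c}\int_A|g(\vy-\vy')|^2\,\mathrm{d}\vy'\,\mathrm{d}\vy=\infty$, and this local estimate near the boundary of the cube is the heart of the matter. After translating so that $A=[-\pi,\pi]^3$, I would localize near a face, say $\{\pi\}\times[-\pi,\pi]^2$: writing $\vy'=(\pi-\delta',u')\in A$ and $\vy=(\pi+\delta,u)\in A^c$ with $\delta,\delta'\in(0,\varepsilon)$ and $|u-u'|<\varepsilon$ small, one has $\vy-\vy'=(\delta+\delta',\,u-u')$ and $\|\vy-\vy'\|^2=(\delta+\delta')^2+|u-u'|^2$. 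For $m>0$, $|g(\vy-\vy')|^2\gtrsim\bigl((\delta+\delta')^2+|u-u'|^2\bigr)^{-2}$ on this region; carrying out the elementary transverse integral over $w=u-u'$ leaves a positive multiple of $\int_0^{\varepsilon}\!\int_0^{\varepsilon}(\delta+\delta')^{-2}\,\mathrm{d}\delta\,\mathrm{d}\delta'$, and the inner integration produces the divergent $\int_0^{\varepsilon}\delta^{-1}\,\mathrm{d}\delta$. (For $m=0$ the first coordinate of $\vy-\vy'$ is $\delta+\delta'$, so $|g(\vy-\vy')|^2\asymp(\delta+\delta')^2\bigl((\delta+\delta')^2+|u-u'|^2\bigr)^{-4}$ on that region, and the same scheme gives a positive multiple of $\int_0^{\varepsilon}\!\int_0^{\varepsilon}(\delta+\delta')^{-4}$, hence $\int_0^{\varepsilon}\delta^{-3}\,\mathrm{d}\delta=\infty$.) Hence the series diverges and the lemma follows. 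I expect this last step to be the main obstacle: one has to verify that the $\|\vx\|^{-2}$-type singularity of the convolution kernel, integrated over the product of the two one-sided neighborhoods of a flat face of the cube, is non-integrable — it is precisely the flatness of the faces (so that normal displacements from the two sides of a face add and $\partial A$ carries positive area) that forces the blow-up. Everything else is bookkeeping with Plancherel/Parseval and the explicit kernel supplied by Lemma~\ref{FunctionLambda}.
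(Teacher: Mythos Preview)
Your proposal is correct and follows essentially the same route as the paper: feed an ONB of the form $f_i^k=\varphi_i\otimes e_k$ into Lemma~\ref{LemmaVacuum}, convert the resulting series into the double integral $\int_{A^c}\int_A|g(\vy-\vy')|^2\,\mathrm{d}\vy'\,\mathrm{d}\vy$ with $g=\mathcal{F}^{-1}(m/\lambda)$ from Lemma~\ref{FunctionLambda}, and then show this integral diverges near $\partial A$ using the $\|\vx\|^{-2}$ singularity of $g$. The paper arrives at the same integral by picking the concrete Fourier basis $\varphi_{\vk}=(2\pi)^{-3/2}e^{i\vk\cdot\vx}\chi_A$ and summing over $\vk$ via Parseval on the translated cube, then carries out the boundary estimate in spherical coordinates inside a thin cylinder above the top face; your basis-free double-Parseval reduction and your face-normal/transverse computation are legitimate and somewhat cleaner alternatives. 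One genuine addition on your side: you handle the massless case through the $p_1/\lambda$ term, whereas the paper's proof bounds below only by the mass term and is therefore silent when $m=0$.
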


\begin{proof}
    We will show that the first summand in \eqref{vacuumOrthonormalBasis} diverges. Since the Lebesgue measure is invariant under translation, we can assume without loss of generality that $\vx=0$. Consider the ONB given by
    \be
    \varphi_{\vk}(\vx)=\frac{1}{(2\pi)^{3/2}}e^{i\vk \cdot \vx}\mathcal{\chi}_A(\vx), \quad \vk\in \mathbb{Z}^3,
    \ee
where $\chi_A$ is the characteristic function on $A=[-\pi,\pi]^3$. Let $(\sigma_j)_j$ be an ONB of $L^2(A^c)=L^2(A^c,\CCC)$. Using the notation of  Lemma \ref{LemmaVacuum},  Lemma \ref{FunctionLambda}, and the Convolution Theorem for tempered distributions (see, e.g., \cite[Thm. IX.4]{Simon}), we get that
\begin{subequations}
\begin{align}
    \sum_{i,j=1}^{\infty} \left\vert \left\langle  \hat{\sigma}_i, \frac{1}{\lambda(\vp)}\hat{\varphi}_j \right\rangle \right\vert^2 &= \nonumber\\ & \hspace{-1cm} = \sum_{\vk \in \mathbb{Z}^3}\sum_{j=1}^{\infty} \left\vert \left\langle \sigma_j, m \sqrt{\frac{2}{\pi}}\frac{K_1(m\Vert \vx \Vert)}{\Vert \vx \Vert}*\varphi_{\vk}(\vx) \right\rangle \right\vert^2 
\\
    & \hspace{-1cm}=\sum_{\vk \in \mathbb{Z}^3}\int_{A^c} \left\vert  m \sqrt{\frac{2}{\pi}}\frac{K_1(m\Vert \vx \Vert)}{\Vert \vx \Vert}*\varphi_{\vk}(\vx) \right\vert^2 \mathrm{d}\vx 
\\
    & \hspace{-1cm} =\sum_{\vk \in \mathbb{Z}}\int_{A^c} \left\vert \int_{\mathbb{R}^3}   m \sqrt{\frac{2}{\pi}}\frac{K_1(m\Vert \vy \Vert)}{\Vert \vy \Vert}   \chi_{[-\pi,\pi]^3}(\vx-\vy) \frac{e^{i \vk \cdot \vy}}{(2\pi)^{3/2}} \mathrm{d}\vy  
    \right\vert^2 \mathrm{d}\vx    
\\ &\hspace{-1cm}=
  \int_{A^c} \left\Vert m \sqrt{\frac{2}{\pi}}\frac{K_1(m\Vert \vy \Vert)}{\Vert \vy \Vert} \right\Vert^2_{L^2(\vx+[-\pi,\pi]^3)}  \mathrm{d}\vx
\\ & \hspace{-1cm} \geq 
    \int_{S_1} \left\Vert m \sqrt{\frac{2}{\pi}}\frac{K_1(m\Vert \vy \Vert)}{\Vert \vy \Vert} \right\Vert^2_{L^2(\vx+[-\pi,\pi]^3)}\mathrm{d}\vx \, ,
\end{align}
\end{subequations}
where 
\begin{equation}
    S_1=\bigl\{\vx \in A^c : \Vert \vx \Vert \leq \sqrt{3}\pi+1 \bigr\},
\end{equation}
see Figure~\ref{fig:S1}.
Now, in \cite[p.~375]{Bessel1} it can be found that
\begin{equation}
K_1(m \Vert \vy \Vert) \sim_0 \frac{1}{2}\Gamma(1)\frac{2}{m \Vert \vy \Vert}= \frac{1}{m \Vert \vy \Vert}, 
\end{equation}
where $f(y) \sim_0 g(y)$ means $\displaystyle \lim_{y \to 0}f(y)g(y)^{-1}=1$.
Thus, for every $0 < \varepsilon <1$, there exists $\delta>0$ such that if $\Vert \vy \Vert < \delta$, then
\begin{equation}
    \left\vert \frac{K_1(m\Vert \vy \Vert)}{m \Vert \vy \Vert}-\frac{1}{(m \Vert \vy \Vert)^2} \right\vert< \frac{\varepsilon}{(m \Vert \vy \Vert)^2},
\end{equation}
i.e.,
\begin{equation}
    \frac{1-\varepsilon}{(m \Vert \vy \Vert)^2}< \frac{K_1(m\Vert \vy \Vert)}{m \Vert \vy \Vert}<\frac{1+\varepsilon}{(m \Vert \vy \Vert)^2}.
\end{equation}
Without loss of generality, we can assume $\delta<1$. Define  for every $\vx \in S_1$ the set $A_{\vx}=(\vx+[-\pi,\pi]^3)\cap B_0(\delta)$ (see Figure~\ref{fig:SetAx}) and also
\begin{equation}
    S_2=S_1 \cap \bigl\{ x\in A^c: \vol(A_{\vx}) \neq 0 \bigr\} .
\end{equation}
Then
\begin{equation}
    \int_{S_1} \left\Vert m \sqrt{\frac{2}{\pi}}\frac{K_1(m\Vert \vy \Vert)}{\Vert \vy \Vert} \right\Vert^2_{L^2(\vx+[-\pi,\pi]^3)} \mathrm{d}\vx \geq \frac{2(1-\varepsilon)^2}{\pi 
 }\int_{S_2}\mathrm{d}\vx\int_{A_{\vx}} \mathrm{d}\vy\frac{1}{\Vert \vy \Vert^4} \, .\label{100}
\end{equation}
Now, since we assume  $\vol(A_{\vx}) \neq 0$ and $\delta <1$, there are three possibilities for the intersection between the cube and the sphere:
\begin{itemize}
\item[1.] The cube slice inside the sphere intersects a coordinate axis.
\item[2.] The cube slice inside the sphere intersects a coordinate plane but not a coordinate axis.
\item[3.] If 1 and 2 are not satisfied then there is just one single vertex of the cube in the
intersection.
\end{itemize}

\begin{center}
\begin{figure}[H]
\begin{subfigure}{.5\linewidth}
\begin{tikzpicture}[line cap=round]
    \draw[thick, -latex] (-2.5,0) -- (2.5,0) node[below] {$x_1$};
    \draw[thick, -latex] (0,-2.5) -- (0,2.5) node[left] {$x_2$};
		
    \newcommand\Square[1]{+(-#1,-#1) rectangle +(#1,#1)}
    \draw [very thin, lightgray] (0,0) ;  
    \draw (0,0) +(-5.4413980927/5,-5.4413980927/5) rectangle +(5.4413980927/5,5.4413980927/5) ;
    \draw[fill=none] (0,0) 
    circle (1.829cm) 
    node [black]{}; 	
    \fill (3.4413980927/3+0.35,0.6) circle[radius=2pt]{};
  
\end{tikzpicture}
\caption{2D projection of the set $S_1$.}
\label{fig:S1}
\end{subfigure}
\begin{subfigure}{.5\linewidth}
 \begin{tikzpicture}
    \draw[thick, -latex] (-2.5,0) -- (2.5,0) node[below] {$y_1$};
    \draw[thick, -latex] (0,-2.5) -- (0,2.5) node[left] {$y_2$};
\fill (3.4413980927/3+0.35,0.6) circle[radius=2pt]{};
  \draw[name path = A] (3.4413980927/3+0.35,0.6) +(-5.6413980927/5,-5.6413980927/5) rectangle +(5.6413980927/5,5.6413980927/5);
   \draw[name path = B](0,0) circle (0.6) node [black,yshift=-1.5cm]{} ;


\end{tikzpicture}
\caption{For the $\vx$ in the previous picture, this picture show the set $A_{\vx}$.}
\label{fig:SetAx}
\end{subfigure}
\caption{Graphical representation of the sets considered in the proof.}
\end{figure}
\end{center}

We will show that for the first case, the integral diverges and we will assume that the cube cuts the $x_3$ axis. For $\vx$ in
\begin{equation}
G=\left\{\vx \in S_1: x_3> \pi, x_1^2+x_2^2\leq \frac{\delta^2}{100} \right\},
\end{equation}
$A_{\vx}$ has the shape of a filled lens as shown in Figure \ref{fig:SetAx}. Expressing $\vy\in A_{\vx}$ in spherical coordinates, this lens is characterized by the conjunction of the two conditions $r\leq \delta$ and
\begin{equation}
r\cos \theta\geq x_3-\pi \,.
\end{equation}
For dealing with the latter, we note that for points in the intersection between the plane $r \cos\theta = x_3-\pi$
and the sphere $r=\delta$, we have that
$\theta = \arccos ((x_3-\pi)/\delta)$. Therefore, continuing from the right-hand side of \eqref{100},
\begin{subequations}
\begin{align}
    \frac{2(1-\varepsilon)^2}{\pi 
 }\int_{S_2}\mathrm{d}\vx\int_{A_{\vx}} \mathrm{d}\vy\frac{1}{\Vert \vy \Vert^4}  \geq\nonumber\\[3mm]
 & \hspace{-5cm} \geq \frac{2(1-\varepsilon)^2}{\pi 
 }\int_{G }\mathrm{d}\vx\int_{A_{\vx}} \mathrm{d}\vy \frac{1}{\Vert \vy \Vert^4} \\[3mm]
    &\hspace{-5cm}\geq \frac{2(1-\varepsilon)^2}{\pi}\int_G \mathrm{d}\vx \, 2\pi \hspace{-3mm} \int\limits_0^{\arccos((x_3-\pi)/\delta)}\hspace{-3mm}\mathrm{d}\theta\int_{
    \frac{x_3-\pi}{\cos \theta} }^{\delta} \mathrm{d}r \frac{\sin \theta}{r^2}  \\[3mm]
    &\hspace{-5cm}= 4(1-\varepsilon)^2\int_G \mathrm{d}\vx \hspace{-3mm} \int\limits_0^{\arccos((x_3-\pi)/\delta) }\hspace{-3mm}\mathrm{d}\theta \, \sin \theta \, \biggl(-\frac{1}{\delta}+\frac{\cos \theta}{x_3-\pi} \biggr)  \\[3mm]   &\hspace{-5cm} = 4(1-\varepsilon)^2\left[   -\int_G \frac{1}{\delta}\Bigl(1-\frac{x_3-\pi}{\delta}\Bigr) \, \mathrm{d}\vx + \int_G \frac{1}{2(x_3-\pi)}\mathrm{d}\vx -\frac{1}{2\delta^2}\int_G(x_3-\pi) \, \mathrm{d}\vx \right].
\end{align}
\end{subequations}

The first and third integral are clearly bounded, so we have to compute the second. $G$ is a
cylinder bounded on the bottom by the cube and on the top by the sphere, and since the function we are integrating is positive, we can lower bound the integral by integrating up
to the height at which the cylinder and the sphere intersect, i.e.,  $\pi \leq x_3 \leq \alpha$, where 
\begin{equation}
\alpha=\sqrt{(\sqrt{3}\pi+1)^2-\frac{\delta^2}{100}}.
\end{equation}
Then,
\begin{equation}
    \int_G \frac{1}{2(x_3-\pi)}\mathrm{d}\vx \geq \int_{\pi}^{\alpha} \frac{1}{2(x_3-\pi)} \pi \frac{\delta^2}{100}\mathrm{d}x_3,
\end{equation}
which is divergent, and the proof of Lemma~\ref{lem3} is finished.
\end{proof}

\underline{\textit{Proof of Theorem \ref{TheoremVacuum}:}}

Let $\psi \in F$, set  $L := \max \{n+m : \psi^{(n,m)}\neq 0\}$ 
and choose $(n_0,m_0)$ such that $n_0+m_0=L$ and $\psi^{(n_0,m_0)} \neq 0$. Then
\be
(Q_{A,(f_i)}^J \psi)^{(n_0+1,m_0+1)} = \sum_{i=1}^J b^*(f_i) c^*(f_i) \psi^{(n_0,m_0)}.
\ee
Since $\psi \in F$, we can write $\psi^{(n_0,m_0)}$ as a finite linear combination of vectors with $n_0$ particles  and $m_0$ anti-particles, i.e.,
\be\label{psiphi}
\psi^{(n_0,m_0)}=\sum_{s=1}^S \varphi_s
\ee
with
\be
\varphi_s= b^*(g_1^s)\hdots b^*(g_{n_0}^s)c^*(h_1^s)\hdots c^*(h_{m_0}^s)\Omega\neq 0.
\ee
As a consequence, we can lower bound 
\begin{subequations}
\begin{align}
    \bigl\Vert Q_{A,(f_j)}^J \psi \bigr\Vert^2 
    & \geq  \bigl\Vert \bigl(Q_{A,(f_j)}^J \psi \bigr)^{(n_0+1,m_0+1)}\bigl\Vert^2\\
     &\geq \sum_{i,j=1}^{J}\Bigl\langle b^*(f_i)c^*(f_i)\psi^{(n_0,m_0)},b^*(f_j)c^*(f_j)\psi^{(n_0,m_0)} \Bigr\rangle\\ 
    & =\sum_{i,j=1}^{J}\langle f_i,P_+ f_j \rangle  \langle c^*(f_i)\psi^{(n_0,m_0)},c^*(f_j)\psi^{(n_0,m_0)}\rangle \: + \nonumber\\
    & +\sum_{i,j=1}^{J}\langle c^*(f_i)\psi^{(n_0,m_0)},b^*(f_j)c^*(f_j)b(f_i)\psi^{(n_0,m_0)} \rangle \\
    & =\sum_{i,j=1}^{J}  \langle f_i,P_+ f_j \rangle \langle f_j, P_- f_i \rangle \Vert \psi^{(n_0,m_0)}\Vert^2 - \nonumber\\
    & - \sum_{i,j=1}^{J}\langle f_i,P_+ f_j \rangle  \langle \psi^{(n_0,m_0)},c^*(f_j)c(f_i)\psi^{(n_0,m_0)} \rangle \: - \nonumber\\
    &  - \sum_{i,j=1}^{J}\langle f_i, P_- f_j\rangle \langle \psi^{(n_0,m_0)}, b^*(f_j) b(f_i) \psi^{(n_0,m_0)}\rangle \: + \nonumber\\
    & +\sum_{i,j=1}^{J}\langle b(f_j)c(f_j)\psi^{(n_0,m_0)}, b(f_i)c(f_i)\psi^{(n_0,m_0)} \rangle \label{eq:FieldsCase1}
\end{align}
\end{subequations}
using the anti-commutation relations \eqref{bcanticommute}. 
Notice that the first sum in \eqref{eq:FieldsCase1} is equal to
\be
\Vert Q_{A,(f_j)}^J \Omega \Vert^2 \Vert \psi^{(n_0,m_0)} \Vert^2, 
\ee
so if we prove that the other three sums converge as $J\to\infty$, then Theorem~\ref{TheoremVacuum} follows. To this end, introduce  the vectors
\begin{equation}
    \varphi_s(k)=b^*(g_1^s)\hdots b^*(g_{n_0}^s)c^*(h_1^s)\hdots c^*(h_{k-1}^s)c^*(h_{k+1}^s) \hdots  c^*(h_{m_0}^s)\Omega
\end{equation}
and 
\begin{equation}
    \varphi_s(u,k)=b^*(g_1^s)\hdots b^*(g_{u-1}^s)b^*(g_{u+1}^s)\hdots b^*(g_{n_0}^s)c^*(h_1^s)\hdots c^*(h_{k-1}^s)c^*(h_{k+1}^s) \hdots  c^*(h_{m_0}^s)\Omega.
\end{equation}
with $1 \leq k \leq  m_0$, $1\leq u \leq n_0$.   Let us turn to the second sum in \eqref{eq:FieldsCase1}. To take the limit $J\to\infty$ means to consider the series
\begin{subequations}
\begin{align}
   \sum_{i,j=1}^{\infty}\sum_{r,s=1}^S\langle f_i,P_+ f_j \rangle  \langle \varphi_r,c^*(f_j)c(f_i)\varphi_s \rangle  & =\nonumber \\
   & \hspace{-5cm} =  \sum_{i,j=1}^{\infty}\sum_{r,s=1}^S\langle f_i,P_+ f_j \rangle  \left\langle \sum_{k=1}^{m_0} (-1)^{k+1} \langle h_k^r,P_- f_j \rangle \varphi_r(k),\sum_{l=1}^{m_0} (-1)^{l+1} \langle h_l^s,P_- f_i \rangle \varphi_s(l)\right\rangle \\
    & \hspace{-5cm} =  \sum_{k,l=1}^{m_0} \sum_{r,s=1}^S  (-1)^{k+l}  \langle h_l^s, P_-P_AP_+P_AP_- h_k^r \rangle  \langle \varphi_r(k), \varphi_s(l) \rangle\,,   
\end{align}
\end{subequations}
which is convergent.
The third sum in  \eqref{eq:FieldsCase1} can be treated in the same way. Finally, the last sum in \eqref{eq:FieldsCase1} can be treated by an analogous reasoning (using again the anti-commutation relations \eqref{bcanticommute}) as follows: Taking $J\to\infty$ amounts to considering the series
\begin{subequations}
\begin{align}
     \sum_{i,j=1}^{\infty} \sum_{r,s=1}^S \langle b(f_j)c(f_j)\varphi_r, b(f_i)c(f_i)\varphi_s \rangle  &= \nonumber\\
       & \hspace{-6cm} =  \sum_{i,j=1}^{\infty}  \sum_{r,s=1}^S\left\langle \sum_{k=1}^{m_0} (-1)^{k+1} \langle h_k^r,P_- f_j \rangle b(f_j)\varphi_r(k),\sum_{l=1}^{m_0} (-1)^{l+1} \langle h_l^s,P_- f_i \rangle b(f_i)\varphi_s(l)\right\rangle \\
        & \hspace{-6cm} = \sum_{i,j=1}^{\infty} \sum_{r,s=1}^S \sum_{k,l=1}^{m_0} (-1)^{k+l}  \langle f_j,P_- h_k^r \rangle  \langle h_l^s,P_- f_i \rangle  \left\langle  b(f_j)\varphi_r(k), b(f_i)\varphi_s(l)\right\rangle \\
         & \hspace{-6cm} =  \sum_{i,j=1}^{\infty} \sum_{r,s=1}^S \sum_{k,l=1}^{m_0} \sum_{u,v=1}^{n_0} (-1)^{u+v} (-1)^{k+l}  \langle f_j,P_- h_k^r \rangle  \langle h_l^s,P_- f_i \rangle \langle g_u^r,P_+ f_j \rangle \langle f_i,P_+ g_v^s \rangle  \langle \varphi_r(u,k), \varphi_s(v,l) \rangle \\
         &\hspace{-6cm} =  \sum_{r,s=1}^S  \sum_{k,l=1}^{m_0} \sum_{u,v=1}^{n_0}  (-1)^{u+v} (-1)^{k+l} \langle g_u^r,P_+ P_A P_- h_k^r \rangle \langle  h_l^s,P_-P_A P_+ g_v^s \rangle\langle \varphi_r(u,k), \varphi_s(v,l) \rangle \,,
\end{align}
\end{subequations}
which is convergent.

\section{Conclusions}
\label{sec:conclusions}

We have studied the possibility of mathematically defining operators $Q_A$ acting on the standard Hilbert space of the free Dirac quantum field in Minkowski space-time and representing the observable given by the amount of charge within an arbitrary region $A$ in 3d physical space. We have described the difficulties we have encountered when trying to give a rigorous definition based on the intuitive formula \eqref{FormalChargeGeneral}. These difficulties suggest that no such definition of $Q_A$ exists. Of course, this leaves open whether a different approach towards defining $Q_A$ might succeed.

\vspace{1cm}

\textbf{Acknowledgments:} PCR acknowledges  funding from the Deutsche Forschungsgemeinschaft (DFG,
German Research Foundation) under Germany's Excellence Strategy-EXC2111-390814868.

\end{document}